\newtheorem{thm}{Theorem}[section]
\newtheorem{lem}[thm]{Lemma}
\newtheorem{prop}[thm]{Proposition}
\newtheorem{clm}{Claim}
\newtheorem{cor}[thm]{Corollary}
\theoremstyle{definition}
\newtheorem{rem}[thm]{Remark}
\newcommand{\R}{\mathbb{R}}
\newcommand{\IER}{\textup{IER}}
\renewcommand{\P}{\mathsf{P}}
\newcommand{\E}{\mathsf{E}}
\newcommand{\Var}{\mathsf{Var}}
\newcommand{\1}{\mathbf{1}}
\newcommand{\e}{\mathbf{e}}
\newcommand{\mhat}{\widehat{\mu}}
\newcommand{\shat}{\widehat{\sigma}}
\newcommand{\Mb}{\mathbb{M}_n}
\newcommand{\Hc}{\mathcal{H}}
\newcommand{\cc}{\delta} 
\renewcommand{\sc}{\rho} 
\begin{document}

\begin{frontmatter}

\title{Two-sample Hypothesis Testing for Inhomogeneous Random Graphs}
\runtitle{Two-sample Testing for Random Graphs}


\begin{aug}
\author{\fnms{Debarghya} \snm{Ghoshdastidar}\corref{Corresponding author}\thanksref{t1,a1}\ead[label=e1]{debarghya.ghoshdastidar@uni-tuebingen.de}},
\author{\fnms{Maurilio} \snm{Gutzeit}\thanksref{t2,a2}\ead[label=e2]{maurilio.gutzeit@ovgu.de}},
\author{\fnms{Alexandra} \snm{Carpentier}\thanksref{t3,a2}\ead[label=e3]{alexandra.carpentier@ovgu.de}}
\and
\author{\fnms{Ulrike} \snm{von Luxburg}\thanksref{t4,a1,a3}\ead[label=e4]{luxburg@informatik.uni-tuebingen.de}}

\thankstext{t1}{Partially supported by the Deutsche Forschungsgemeinschaft (DFG, FOR1735), the Institutional Strategy of the University of T{\"u}bingen (DFG, ZUK 63) and by the Baden-W{\"u}rttemberg Stiftung through the Eliteprogramm for Postdocs.}
\thankstext{t2}{Supported by the Deutsche Forschungsgemeinschaft (DFG) Emmy Noether grant MuSyAD (CA 1488/1-1).}
\thankstext{t3}{Partially supported by the DFG Emmy Noether grant MuSyAD (CA 1488/1-1), by the DFG - 314838170, GRK 2297 MathCoRe, by the DFG GRK 2433 DAEDALUS (384950143/GRK2433), by the DFG CRC 1294 `Data Assimilation', Project A03, and by the UFA-DFH through the French-German Doktorandenkolleg CDFA 01-18.}
\thankstext{t4}{Partially supported by the DFG FOR1735, the Institutional Strategy of the University of T{\"u}bingen (DFG, ZUK 63) and DFG Cluster of Excellence ``Machine Learning – New Perspectives for Science'' EXC 2064/1, project number 390727645.}
\runauthor{Ghoshdastidar et al.}

\affiliation{University of T{\"u}bingen\thanksmark{a1}~ Otto von Guericke University Magdeburg\thanksmark{a2}\\
 Max Planck Institute for Intelligent Systems, T{\"u}bingen\thanksmark{a3}}

\address{
D. Ghoshdastidar, U. von Luxburg\\
University of T{\"u}bingen\\
Department of Computer Science\\ 
Maria von Linden Stra{\ss}e 6\\
72076 T{\"u}bingen, Germany\\
\printead{e1}\\
\printead*{e4}\\
M. Gutzeit, A. Carpentier\\
Otto von Guericke University Magdeburg\\
Faculty of Mathematics\\
Universit{\"a}tsplatz 2\\
39106 Magdeburg, Germany\\
\printead{e2}\\
\printead*{e3}\\
}

\end{aug}

\begin{abstract}
The study of networks leads to a wide range of high dimensional inference problems. In many practical applications, one needs to draw inference from one or few large sparse networks. The present paper studies hypothesis testing of graphs in this high-dimensional regime, where the goal is to test between two populations of inhomogeneous random graphs defined on the same set of $n$ vertices. The size of each population $m$ is much smaller than $n$, and can even  be a constant as small as 1. The critical question in this context is whether the problem is solvable for small $m$.

We answer this question from a minimax testing perspective. Let $P,Q$ be the population adjacencies of two sparse inhomogeneous random graph models, and $d$ be a suitably defined distance function. Given a population of $m$ graphs from each model, we derive minimax separation rates for the problem of testing $P=Q$ against $d(P,Q)>\rho$. We observe that if $m$ is small, then the minimax separation is too large  for some popular choices of $d$, including total variation distance between corresponding distributions. This implies that some models that are widely separated in $d$ cannot be distinguished for small $m$, and hence, the testing problem is generally  not solvable in these cases.

We also show that if $m>1$, then the minimax separation is relatively small if $d$ is the Frobenius norm or operator norm distance between $P$ and $Q$. For $m=1$, only the latter distance provides small minimax separation. Thus, for these distances, the problem is solvable for small $m$. We also present near-optimal two-sample tests in both cases, where tests are adaptive with respect to sparsity level of the graphs.

\end{abstract}

\begin{keyword}[class=MSC]
\kwd[Primary ]{62H15}
\kwd[; secondary ]{62C20; 05C80; 60B20}
\end{keyword}

\begin{keyword}
\kwd{Two-sample test}
\kwd{Inhomogeneous Erd\H{o}s-R{\'e}nyi model}
\kwd{Minimax testing}
\end{keyword}

\end{frontmatter}


\section{Introduction}

Analysis of random graphs has piqued the curiosity of probabilists since its inception decades ago, but the  widespread use of networks in recent times has made statistical inference from random graphs a topic of immense interest for both theoretical and applied researchers.
This has caused a fruitful interplay between theory and practice leading to deep understanding of statistical problems that, in turn, has led to advancements in applied research.
Significant  progress is clearly visible in problems related to network modelling~\citep{Albert_2002_jour_RevModPhys,Lovasz_2012_book_AMS}, community detection~\citep{Decelle_2011_jour_PhysRevE,Abbe_2016_conf_NIPS}, network dynamics~\citep{Berger_2005_conf_SODA} among others, where statistically guaranteed methods have emerged as effective practical solutions.
Quite surprisingly, the classical problem of hypothesis testing of random graphs is yet to benefit from such joint efforts from theoretical and applied researchers.
It should be noted the problem itself is actively studied in both communities. Testing between brain or `omics' networks have surfaced as a crucial challenge in the context of both modelling and decision making~\citep{Ginestet_2017_jour_AOAS,Hyduke_2013_jour_MolBioSys}.
On the other hand, phase transitions are now known for the problems of detecting high-dimensional geometry or strongly connected groups in large random graphs~\citep{Bubeck_2016_jour_RSA,AriasCastro_2014_jour_AnnStat}. 
However, little progress has been made in the design of consistent tests for general models of random graphs. The present paper takes a step towards addressing this general concern.

While research on testing large random graphs has been limited, hypothesis testing in large dimension is an integral part of modern statistics literature.
In fact, \citet{Bai_1996_jour_StatisticaSinica} demonstrated the need for studying high dimensional statistics through a two-sample testing problem, where one tests between two $n$-variate normal distributions with different means by accessing  $m$ i.i.d. observations from either distributions, where $m\ll n$
(we denote the dimension by $n$ since, in the context of graphs, the number of vertices $n$ governs the dimensionality of the problem).
More recent works in this direction provide tests and asymptotic guarantees as $m\to\infty$ and $\frac{n}{m}\to\infty$~\citep{Chen_2010_jour_AnnStat,Cai_2014_jour_RoyalStatSocB,Ramdas_2015_arxiv_00655}.
Similar studies also exist in the context of testing whether a $n$-dimensional covariance matrix is identity, where only $m\ll n$ i.i.d. observations of the data is available~\citep{Ledoit_2002_jour_AnnStat,Berthet_2013_jour_AnnStat,AriasCastro_2015_jour_Bernoulli}.
It is also known that the computational complexity of this problem is closely related to the clique detection problem in random graphs~\citep{Berthet_2013_jour_AnnStat}.
An extreme version of high dimensional testing arises in the signal detection literature, where one observes a high-dimensional (Gaussian) signal, and tests the nullity of its mean. Subsequently, asymptotics of the problem is considered for $n\to\infty$ but fixed sample size ($m=1$ or a constant), and minimax separation rates between the null and the alternative hypotheses are derived \citep{Ingster_2003_book_Springer,Baraud_2002_jour_Bernoulli,Verzelen_2014_arxiv_1478,Mukherjee_2015_jour_AnnStat}.
The signal detection problem has also been extended in the case of matrices in the context of trace regression~\citep{Carpentier_2015_jour_EJStat} and sub-matrix detection~\citep{Sun_2008_jour_JMLR} among others, where the latter generalises the planted clique detection problem.

In practice, the problem of testing random graphs comes in a wide range of flavours. 
For instance, while dealing with graphs associated with chemical compounds~\citep{Shervashidze_2011_jour_JMLR} or brain networks of several patients collected at multiple laboratories~\citep{Ginestet_2017_jour_AOAS}, one has access to a large number of graphs (large $m$). 
This scenario is more amenable as one can resort to the vast literature of non-parametric hypothesis testing that can even be applied to random graphs.
A direct approach to this problem is to use kernel based tests~\citep{Gretton_2012_jour_JMLR} in conjunction with graph kernels~\citep{Vishwanathan_2010_jour_JMLR,Kondor_2016_conf_NIPS}, which does not require any structural assumptions on the network models.
However, known guarantees for such tests depend crucially on the sample size, and one cannot conclude about the fidelity of such tests for very small $m$.
A more challenging situation arises when $m$ is small, and unfortunately, this is often the case in network analysis. 
For example, if the graphs correspond to brain networks collected from patients in a single lab setup $(m<20)$, brain networks of one individual obtained from test-retest MRI scans~\citep{Landman_2011_jour_Neuroimage},  or molecular interaction networks arising from genomic or proteomic data~\citep{Hyduke_2013_jour_MolBioSys}. Test-retest data of a patient provide only $m=2$ networks, while omics data typically result in one large interaction network, that is, $m=1$.
Hence, designing two-sample tests for small populations for graphs is a problem of immense significance, and yet, practical tests with statistical guarantees are rather limited.

From a theoretical perspective, only a handful of results on testing large random graphs are known. 
While finding hidden cliques have been a long-standing open problem, \citet{AriasCastro_2014_jour_AnnStat,Verzelen_2015_jour_AOAP} for the first time provide a characterisation of the more basic problem of detecting a planted clique in an Erd\H{o}s-R{\'e}nyi  graph. 
\citet{Gao_2017_arxiv_06742} and \citet{Lei_2016_jour_AnnStat} consider generalised variants of this problem while designing tests to distinguish a stochastic block model from an Erd\H{o}s-R{\'e}nyi  graph, or to estimate the number of communities in a stochastic block model, respectively.
In a different direction, \citet{Bubeck_2016_jour_RSA} study the classical problem of testing whether a given graph corresponds to a neighbourhood graph in a high-dimensional space, or it is generated from Erd\H{o}s-R{\'e}nyi model.
This result is in fact a specific instance of the generic problem of detecting whether a network data has a dependence structure or is unstructured~\citep{Ryabko_2017_conf_ALT,Bresler_2018_conf_COLT,Daskalakis_2018_conf_SODA_ising}.
The first study on two-sample testing of graphs, under a relatively broad framework is by~\citet{Tang_2016_jour_JCompGraphStat}, where the authors test between a pair of random dot product graphs which are undirected graphs on a common set of $n$ vertices with mutually independent edges and low-rank population adjacency matrices.
A test statistic based on the difference in adjacency spectral embeddings is shown to be asymptotically consistent as $n\to\infty$ provided that the rank of the population adjacencies is fixed and known.
\citet{Tang_2017_jour_Bernoulli} study a more general problem of comparing two random dot product graphs defined on different vertex sets, where the vertices have latent Euclidean representations. The latent representation can be recovered from the adjacency spectral embedding, and kernel two-sample testing for Euclidean data can be employed to solve the testing problem.
\citet{Ghoshdastidar_2017_conf_COLT} provide a framework to formulate the graph two-sample problem with minimal structural restrictions, and show that this general framework can be used to prove minimax optimality of tests based on triangle counts and spectral properties under special cases.

In this paper, we restrict ourselves to graphs on a common vertex set of size $n$ and sampled from an inhomogeneous Erd\H{o}s-R{\'e}nyi (IER) model~\citep{Bollobas_2007_jour_RSA}, that is, we consider undirected and unweighted random graphs where the edges occur independently, but there is no structural assumption on the population adjacency matrix.
We study the problem of testing between two IER models, where $m$ i.i.d. graphs are observed for each model. 
Apparently, allowing $m\geq1$ appears to be a slight generalisation of the $m=1$ case~\citep{Tang_2016_jour_JCompGraphStat}, but we show that in some situations, the testing problem behaves differently in the $m=1$ and $m>1$ cases.
It is also well established that many graph learning problems have different behaviour in the case of dense and sparse graphs. This is indeed true in the context of testing for geometric structures~\citep{Bubeck_2016_jour_RSA} and community detection~\citep{Verzelen_2015_jour_AOAP}.
Bearing this in mind, we study the two-sample problem at different levels of sparsity of the graph.
A formal description of the problem is presented in Section~\ref{sec_problem}.

Given the above framework, one may resort to a variety of testing procedures. 
A classical approach involves viewing the problem as an instance of closeness testing for high dimensional discrete distributions~\citep{Chan_2014_conf_SODA,Daskalakis_2018_conf_SODA_ising} and using variants of the $\chi^2$-test or related localisation procedures.
On the other hand, exploiting the independence of edges, one may even view the problem as a instance of multiple testing, and may resort to tests based on higher criticism~\citep{Donoho_2004_jour_AnnStat,Donoho_2015_jour_StatScience}.
A more direct approach may be to simply compare the adjacency spectral embeddings~\citep{Tang_2016_jour_JCompGraphStat}, or other network statistics~\citep{Ghoshdastidar_2017_conf_COLT} or even the raw adjacency matrices.
Sections~\ref{sec_notsolvable}--\ref{sec_operator} present a variety of testing problems, which differ in terms of the distance $d(P,Q)$, that is, how we quantify the separation between the two models.
We show that some of the above principles are not useful for small $m$ since the associated testing problems are generally unsolvable in these cases.
However, in some cases, one can construct uniformly consistent tests that work with a small number of observation, even $m=1$.
Section~\ref{sec_discussions} discusses the practicality of graph two-sample testing and also presents minimax separation under special cases of the IER model, such as Erd\H{o}s-R{\'e}nyi or stochastic block models, or under  different notions of sparsity in graphs.
The detailed proofs are provided in the appendix.

\section{Problem statement}
\label{sec_problem}

In this section, we formally state the generic two-sample graph testing problem studied in this paper. 
We also present the minimax framework that forms the basis of our theoretical analysis.

We use the notations $\lesssim$ and $\gtrsim$ to denote the standard inequalities but ignoring absolute constants. Further, we use $\asymp$ to denote that two quantities are same up to possible difference in constant scaling. 
We use $\wedge$ and $\vee$ (or $\bigwedge$ and $\bigvee$) to denote minimum and maximum, respectively.
We also need several standard norms and distances. 
For two discrete distributions, we denote the total variation distance by $TV(\cdot,\cdot)$ and the symmetric Kullback-Leibler (KL) divergence by $SKL(\cdot,\cdot)$. The latter is a symmetrized version of KL-divergence~\citep{Daskalakis_2018_conf_SODA_ising}.
We use the following quantities for any matrix:
\\(i) Frobenius norm, $\Vert \cdot \Vert_F$, is the root of sum of squares of all entries,
\\(ii) max norm, $\Vert\cdot\Vert_{max}$, is largest absolute entry of the matrix,
\\(iii) zero norm, $\Vert\cdot\Vert_0$, is the number of non-zero entries,
\\(iv) operator norm, $\Vert \cdot \Vert_{op}$, is the largest singular value of the matrix, and
\\(v) row sum norm, $\Vert \cdot\Vert_{row}$, (or, the induced $\infty$-norm) is the maximum absolute row sum of the matrix.

\subsection{The model and the testing problem}

Throughout the paper, $V=\{1,2,\ldots,n\}$ denotes a set of $n$ vertices, and we consider undirected graphs defined on $V$. 
Any such graph can be expressed as $G = (V,E_G)$, where $E_G$ is the set of undirected edges.
We use the symmetric matrix $A_G\in\{0,1\}^{n\times n}$ to denote the adjacency matrix of $G$, where $(A_G)_{ij}=1$ if $(i,j)\in E_G$, and 0 otherwise.
The class of inhomogeneous random graphs, or more precisely inhomogeneous Erd\H{o}s-R{\'e}nyi (IER) graphs, on $V$ can be described as follows.
Let $\Mb \subset [0,1]^{n\times n}$ be the set of symmetric matrices with zero diagonal, and off-diagonal entries in $[0,1]$. 
For any $P\in\Mb$, we say that $G$ is an IER graph with population adjacency $P$, denoted by $G\sim \IER(P)$, if the adjacency matrix $A_G$ is a symmetric random matrix such that $(A_G)_{ij} \sim \text{Bernoulli}_{0-1}(P_{ij})$, and  $\left( (A_G)_{ij}\right)_{1\leq i<j\leq n}$ are independent.

Let $P,Q\in \Mb$.
Given $m$ independent observations from each of $\IER(P)$ and $\IER(Q)$,  we would like to test between the alternatives
\begin{equation}
\Hc_0: P = Q
\qquad \text{and} \qquad
\Hc_1: d(P,Q)>\sc
\label{eq_problem_general}
\end{equation}
for some specified distance function $d$ and a threshold $\sc\geq0$.
At this stage, note that the distribution $\IER(P)$ is completely characterised by the expected adjacency matrix $P$.
Hence, $\Hc_0$ in~\eqref{eq_problem_general} is similar both under the mean difference alternative and the general difference alternative~\citep{Ramdas_2015_arxiv_00655}.
Hence, one may assume $d$ to be either a distance between the distributions $\IER(P)$ and $\IER(Q)$, or a matrix distance between $P$ and $Q$.
Different examples of $d$ are considered in Sections~\ref{sec_notsolvable}--\ref{sec_operator}, which result in specific instances of the testing problems.

We note that the complexity of graph inference problems is often governed by the sparsity of the graphs.
To take the effect of sparsity into account, we restrict the problem to models such that $\Vert P\Vert_{max} \vee \Vert Q\Vert_{max} \leq \cc$ for some  $\cc\in(0,1]$ where $\cc$ may decay with $m,n$.
Intuitively, we consider only graphs that are uniformly sparse, that is any edge can occur with probability at most $\cc$.
For instance, if $\cc\asymp\frac1n$, we mostly observe sparse graphs with bounded expected degrees.
Such a uniform sparsity restriction is along the lines of a scalar sparsity parameters introduced in some graph estimation problems~\citep{Klopp_2017_jour_AnnStat}.
More general notions of sparsity may be considered as discussed later in Section~\ref{sec_discussions}.
Based on the above considerations, we formally state the following general framework for graph two-sample testing:
\begin{equation}
 \Hc_0: (P,Q) \in \Omega_0 \qquad \text{vs.} \qquad \Hc_1:  (P,Q)\in\Omega_1,
 \label{eqn_prob}
\end{equation}  
where 
\begin{equation}
\begin{aligned}
\Omega_0 &= \{ (P,Q) \in \Mb\times\Mb : P=Q, ~\Vert P\Vert_{max} \leq \cc\}, 
\\
\Omega_1 &= \{ (P,Q) \in \Mb\times\Mb : d(P,Q) > \sc, ~\Vert P\Vert_{max} \vee \Vert Q\Vert_{max} \leq \cc\}.
\end{aligned}
\label{eqn_prob_sets}
\end{equation}
Note that the hypotheses are governed by the distance $d$, the integers $m,n$, and the positive scalars $\sc,\cc$, where the last two terms may depend on $m,n$.

\subsection{Minimax framework}
Given the graphs $G_1,\ldots,G_m\sim_{\text{iid}} \IER(P)$ and $H_1,\ldots,H_m\sim_{\text{iid}} \IER(Q)$,
a test $\Psi$ is a binary function of $2m$ adjacency matrices, where $\Psi=0$ when the test accepts $\Hc_0$, and $\Psi=1$ otherwise.
The maximum or worst-case risk of a test is given by
\begin{equation*}
R(\Psi,n,m,d,\sc,\cc) =  \sup_{\theta\in \Omega_0} \P_\theta(\Psi = 1) +  \sup_{\theta\in \Omega_1} \P_\theta(\Psi = 0),
\end{equation*}
which is the sum of maximum possible Type-I and Type-II error rates incurred by the test.
Here, we use $\theta$ to denote any tuple $(P,Q)$.
The minimax risk for the problem in~\eqref{eqn_prob}--\eqref{eqn_prob_sets} is defined as
\begin{equation}
R^*(n,m,d,\sc,\cc) = \inf\limits_\Psi R(\Psi,n,m,d,\sc,\cc).
\label{eqn_minimaxrisk}
\end{equation}

Our aim in this paper is to find the minimax separation $\sc^*$ for a given problem, which is the smallest possible $\sc$ such that $R^*(n,m,d,\sc,\cc)\leq \eta$ for some pre-specified $\eta\in(0,1)$.
In the subsequent sections, we consider testing problems where the separation between $P$ and $Q$ is defined in terms of various distance functions. 
We provide bounds for $\sc^*$ for the different testing problems in terms of the  various parameters of the problem~\eqref{eqn_prob}--\eqref{eqn_prob_sets}. 
Though our formal bounds are explicit in terms of $\eta$, we generally assume $\eta$ to be a pre-specified constant (for example, $\eta=0.05$) and focus on the dependence of $\sc^*$ on $n,m,$ and $\cc$.
Our aim is to provide upper and lower bounds for $\sc^*$ that are same up to difference in absolute constants and functions of $\eta$.

\section{Challenges of testing with small sample size}
\label{sec_notsolvable}

The theme of this paper is to test between two populations of sparse graphs, where the sample size $m$ is much smaller than the number of vertices $n$.
Our main interest is in cases where $m$ is a small constant, or may grow very slowly with $n$. 
In this section, we show that in case of some popular distance functions, the testing problem is nearly unsolvable if $m$ is small.

We formalise the notion of unsolvability in the following way. 
For any instance of the two-sample testing problem~\eqref{eqn_prob}--\eqref{eqn_prob_sets}, there is a trivial upper bound for $\sc^*$ which is the maximal possible value that can be attained by $d(P,Q)$ (diameter with respect to $d$). 
As an example, if $d(P,Q) = TV\big(\IER(P),\IER(Q)\big)$, then $\sc^*\leq 1$ trivially. Similarly, if $d(P,Q) = \Vert P-Q\Vert_F$, a trivial upper bound is $\sc^*\leq n\cc$.
On the other hand, for small $m$, if there is a lower bound $\sc_\ell \leq \sc^*$ such that $\sc_\ell$ is equal or close to the trivial upper bound, then there exist model pairs such that $d(P,Q)$ is nearly as large as the diameter and yet cannot be distinguished for small $m$.
Hence, we may conclude that the problem~\eqref{eqn_prob}--\eqref{eqn_prob_sets} with the specific choice of $d$ is unsolvable for small $m$ under a worst-case (minimax) analysis.

We present the first instance of such an impossibility result for the case of total variation distance.
For any two probability mass functions $p$ and $q$, defined on the space of undirected $n$-vertex graphs,
\begin{align*}
    TV(p,q) = \frac12 \sum_G | p(G) - q(G)|,
\end{align*}
where the summation is over all unweighted undirected graphs on $n$ vertices.
We present the following minimax rate in the small sample regime.
\begin{prop}[$\sc^*$ for total variation distance]
\label{prop_tv}
Consider the problem in \eqref{eqn_prob}--\eqref{eqn_prob_sets} with $d(P,Q) = TV\big(\IER(P),\IER(Q)\big)$ and $\cc\in(0,1)$. Let $\eta\in(0,1)$ be the allowable risk. For any $\cc \geq C'\frac{\ln n}{n^2}$, 
\begin{displaymath}
1-\frac1n \leq \sc^* \leq 1
\qquad\text{for~ } m\leq \displaystyle C\sqrt{\ln(1+4(1-\eta)^2)} \frac{n}{\ln n} \;,
\end{displaymath}
where $C\in(0,1)$ and $C'>1$ are absolute constants.

In particular, for large $n$ and $m \lesssim \frac{n}{\ln n}$, we have $\sc^*\approx 1$.
\end{prop}

\begin{proof}[Proof sketch]
    The main idea is to find an appropriate choice of $(P,Q)$ such that $TV\big(\IER(P),\IER(Q)\big) \geq 1 - \frac1n$, and yet they cannot be distinguished using $m\lesssim \frac{n}{\ln n}$ samples. 
    For this, we use standard approaches to derive minimax lower bounds~\citep{Baraud_2002_jour_Bernoulli}. 
    This is described in the present context of two-sample testing of $\IER$ graphs in the appendix.
    
    In particular for the present proof, the choice of $P,Q$ is the following. 
    Under $\Hc_0$, we set $P=Q$ such that the model corresponds to Erd\H{o}s-R{\'e}nyi (ER) model with edge probability $\frac\cc2$. Under $\Hc_1$, we keep $P$ as before whereas each entry of $Q$ is chosen independent and uniformly from $\{\frac\cc2-\gamma,\frac\cc2+\gamma\}$. 
    An appropriate choice of $\gamma\leq\frac\cc2$ leads to the result.
\end{proof}

The stated lower bound shows that at least $m\gtrsim \frac{n}{\ln n}$ samples are needed to test for separation in total variation distance, which is beyond the small sample regime that we are interested in.
In fact, in the case of constant $m$, one can improve the stated bound as $1-e^{-C''n} \leq \sc^* \leq 1$, where $C''$ is a constant that depends on $m$.

An impossibility result also holds in the case of symmetric KL-divergence, which has been effectively used for high dimensional discrete distributions, particularly Ising models~\citep{Daskalakis_2018_conf_SODA_ising}.
For any two probability mass functions $p$ and $q$, defined on the space of undirected $n$-vertex graphs, the symmetric KL-divergence is given by
\begin{align*}
    SKL(p,q) &= \sum_G p(G)\ln\left(\frac{p(G)}{q(G)}\right) + q(G)\ln\left(\frac{q(G)}{p(G)}\right),
\end{align*}
where the summation is over all unweighted undirected graphs on $n$ vertices.
Note that the above distance is unbounded even for finite $n$ since $SKL(p,q) = \infty$ if there exists $G_0$ such that $p(G_0)=0$ but $q(G_0)\neq0$.
We present the following result that demonstrates the impossibility of testing with respect to the symmetric KL-divergence when sample size $m$ is small.

\begin{prop}[$\sc^*$ for symmetric KL-divergence]
\label{prop_skl}
Let $\eta\in(0,1)$ and consider the problem in~\eqref{eqn_prob}--\eqref{eqn_prob_sets} with $d(P,Q) = SKL\big(\IER(P),\IER(Q)\big)$ and any $\cc\in(0,1)$. Then
\begin{displaymath}
\sc^* = \infty
\qquad\text{for~ } m\leq \displaystyle \frac{2}{\cc}\ln((1-\eta)n) \,.
\end{displaymath}
\end{prop}

\begin{proof}[Proof sketch]
    The basic technique for the proof is similar to Proposition~\ref{prop_tv}, but we choose $P$ and $Q$ such that $SKL\big(\IER(P),\IER(Q)\big)=\infty$, and yet the models are indistinguishable for small $m$.
    
    To be precise, under $\Hc_0$ we set $P=Q$ corresponding to an ER model.
    Under $\Hc_1$, we set $Q$ to be the same as $P$ except for a randomly chosen entry, for which $Q_{ij}=0\neq P_{ij}$. This implies that $\IER(P)$ and $\IER(Q)$ do not have a common support, which leads to $SKL\big(\IER(P),\IER(Q)\big)=\infty$.
    However, if the sample size is small (small $m$) or the graphs are sparse (small $\cc$), then the models cannot be distinguished.
\end{proof}

The above result shows that testing for separation in symmetric KL-divergence is impossible for $m\lesssim \ln n$ sample even when the graphs are  dense $(\cc\asymp1)$. However, the situation is worse in the case of sparse graphs. If $\cc\asymp \frac1n$, then at least $m\gtrsim n\ln n$ samples are necessary, which is worse than the condition for total variation distance.

Both Propositions~\ref{prop_tv} and~\ref{prop_skl} suggest that achieving a small sample complexity (small $m$) could be difficult under general difference alternatives, that is, if $d$ corresponds to distributional distances.
Hence, subsequent discussions focus only on matrix distances.
However, even in this case, the two-sample problem is not necessarily easily solvable for all distances or dissimilarities.
\begin{prop}[$\sc^*$ for zero norm / effect rarity]
\label{prop_zero}
Let $\eta\in(0,1)$ and consider the problem in~\eqref{eqn_prob}--\eqref{eqn_prob_sets} with $d(P,Q) = \Vert P-Q\Vert_0$ and any $\cc\in(0,1)$. Then
\begin{displaymath}
\sc^* = n(n-1)
\qquad\text{for all~ } m<\infty \,.
\end{displaymath}
\end{prop}

\begin{proof}[Proof sketch]
    The proof is straightforward since the entries $P$ and $Q$ can be arbitrarily close but still be unequal. Hence, the models may not be distinguishable though $\Vert P-Q\Vert_0 = n(n-1)$, which is the trivial upper bound.
\end{proof}

Proposition~\ref{prop_zero} may be viewed as a trivial extremity of the rare/weak effect studied in the context of multiple testing~\citep{Donoho_2015_jour_StatScience}. 
To put it simply, here we view the problem as testing $P_{ij}=Q_{ij}$ or $P_{ij}\neq Q_{ij}$ for every $i<j$, and the edge independence in $\IER$ graphs leads to a problem of multiple independent comparisons.
Proposition~\ref{prop_zero} states that if $\min\limits_{P_{ij}\neq Q_{ij}} |P_{ij} - Q_{ij}|$ is arbitrarily small, that is, the individual effects are arbitrarily weak, then they cannot be detected even when the effects are dense $\Vert P-Q\Vert_0 = n(n-1)$.
A more detailed analysis of the rare/weak effect in the sparse Bernoulli setting may be done by imposing a threshold $\min\limits_{P_{ij}\neq Q_{ij}} |P_{ij} - Q_{ij}|$ that characterises weakness of the effect. 
We do not discuss further on this effect, and instead, we proceed to other instances of~\eqref{eqn_prob}--\eqref{eqn_prob_sets}, where the problem can be solved for small $m$.

\section{Testing for separation in Frobenius norm}
\label{sec_frobenius}

The previous section focused on impossibility results, where $\sc^*$ is typically large (close to trivial upper bound) when $m$ is small.
In this section and the next one, we study two instances of the problem~\eqref{eqn_prob}--\eqref{eqn_prob_sets} where tests can be constructed even for small $m$.
Formally, we show that the lower bound for $\sc^*$ can be much smaller than the trivial upper bound, and subsequently, we propose two-sample tests to derive nearly matching upper bounds for $\sc^*$. 

We first quantify the separation in terms of Frobenius norm, that is, $d(P,Q) = \Vert P-Q\Vert_F$.
This is equivalent to viewing the adjacencies as $\binom{n}{2}$-dimensional Bernoulli vectors, and using two-sample test for high dimensional vectors --- a well-studied problem in the Gaussian case~\citep{Chen_2010_jour_AnnStat}.
We state the following bounds for the minimax separation $\sc^*$.
\begin{thm}[$\sc^*$ for Frobenius norm separation]
\label{thm_frotest}
Consider the two-sample problem~\eqref{eqn_prob}--\eqref{eqn_prob_sets}  with $d(P,Q) = \Vert P-Q\Vert_F$, any $\cc\in(0,1)$ and any $\eta\in(0,1)$.
There exist absolute constants $C_1,C_2\geq1$ such that:
\begin{enumerate}
\item $\displaystyle \frac{n\cc}{4} \leq \sc^* \leq n\cc\;$ for $m=1$,
\vskip1ex
\item $\displaystyle \bigg(\frac14\bigwedge\sqrt{\frac{\eta^2\ell_\eta}{8C_1}}\bigg)n\cc \leq \sc^* \leq n\cc\;$ for $m>1$ and $\cc \leq \displaystyle\frac{C_1}{\eta^2 m n}$, and
\vskip1ex
\item $\displaystyle \sqrt{\frac{\ell_\eta}{8} \frac{n\cc}{m}} \leq \sc^* \leq \sqrt{\frac{C_2}{\eta}\frac{n\cc}{m}}\;$ for $m>1$ and $\cc \geq \displaystyle\frac{C_1}{\eta^2 m n}\;$, 
\end{enumerate}
where $\ell_\eta = \sqrt{\ln\left(1+4(1-\eta)^2\right)}$.
Hence, assuming the allowable risk $\eta$ is fixed, we have $\sc^* \asymp n\cc$ for $m=1$ and $\sc^* \asymp n\cc \wedge \sqrt{\frac{n\cc}{m}}\,$ for $m\geq2$.
\end{thm}

Theorem~\ref{thm_frotest} provides a clear characterisation of the minimax separation $\sc^*$ (up to factors of $\eta$) when the distance between models is in terms of Frobenius norm.
The second and third statements deal with the case of $m>1$. 
In the ultra-sparse regime, that is $\cc \lesssim \frac{1}{mn}$, one observes a total of only $O(n)$ edges from the entire population of $2m$ graphs generated from either models. This information is insufficient for testing equality of models, and hence, it is not surprising that  $\sc^*\asymp n\cc$, which is the trivial upper bound. 
On the other hand, when $\cc \gtrsim \frac{1}{mn}$, we find a non-trivial separation rate indicating that the problem is solvable in this case.

The surprising finding of Theorem~\ref{thm_frotest} is that $\sc^*\asymp n\cc$ for $m=1$, which informally means that the problem is not solvable when one observes only $m=1$ sample from each model.
This result is significant since it shows that the problem of testing for separation in Frobenius norm is unsuitable in the setting of comparing between two large networks, for instance, the case of testing between two omics networks.

\subsection{Proof of Theorem~\ref{thm_frotest}}
We provide an outline of the proof of the above result highlighting the key technical lemmas.
We sketch their proofs here, and the detailed proofs can be found in the appendix.
To prove the lower bounds, we have the following result.

\begin{lem}[Necessary conditions for  detecting Frobenius norm separation]
\label{thm_frotest_LB}
For the testing problem~\eqref{eqn_prob}--\eqref{eqn_prob_sets} with $d(P,Q) = \Vert P-Q\Vert_F$ and for any $\eta\in(0,1)$, the minimax risk~\eqref{eqn_minimaxrisk}  is at least $\eta$ if either of the following conditions hold:
\begin{displaymath}
\textup{(i)~} \sc <  \frac{n\cc}{4} \bigwedge \sqrt{\frac{\ell_\eta}{8}\frac{n\cc}{m}}\,, \qquad \text{or} \qquad
\textup{(ii)~} m=1, ~\sc < {\frac{n\cc}{4}}\,.
\end{displaymath}
\end{lem}

\begin{proof}[Proof sketch]
    The proof follows the basic approach of Proposition~\ref{prop_tv}, and also uses same choice of $P,Q$.
    We set $P=Q$ corresponding to ER model with edge probability $\frac\cc2$ under $\Hc_0$. 
    For $\Hc_1$, we set the same $P$, but each entry of $Q$ is chosen independent and uniformly from $\{\frac\cc2-\gamma,\frac\cc2+\gamma\}$. 
    One can easily see that $Q$ is chosen uniformly from a set of $2^{n(n-1)/2}$ matrices, but for each choice of $Q$, we have $\Vert P-Q\Vert_F \approx n\gamma$.
    Hence, a choice of $\gamma \in \left(\frac{\sc}{n}, \frac\cc2\right]$ implies that the pair of $(P,Q)$ for every choice of $Q$ lies in $\Omega_1$.
    
    Subsequently, we use the techniques of \citet{Baraud_2002_jour_Bernoulli} to show that for $m\geq2$ and $\cc \gtrsim \frac{1}{mn}$, there is an appropriate choice $\gamma < \frac\cc2$ for which the random choice of $(P,Q)\in\Omega_1$ cannot be distinguished from the null case. 
    If $\cc \lesssim \frac{1}{mn}$, then the same situation occurs even for the choice $\gamma=\frac\cc2$.
    Finally for $m=1$, we observe that the same proof leads to the conclusion that the random choice of $(P,Q)\in\Omega_1$ is indistinguishable from the null case for any choice of $\gamma\leq \frac\cc2$. In particular, $\gamma=\frac\cc2$ leads to the claim in (ii).
    
    We elaborate on the distinction between the cases $m=1$ and $m>1$. 
    Consider the former case of $m=1$ under $\Hc_1$, where we have $G_1\sim \IER(P) = \text{ER}(\frac\cc2)$ and $H_1\sim \IER(Q)$ with $Q$ being chosen randomly as described above.
    Due the uniform choice of $Q$, one can easily verify that the probability of each edge in $H_1$ is $\frac\cc2$, which is same as that of $G_1$. 
    Hence, although the two graphs are sampled from different generative models with $\Vert P-Q\Vert_F > \sc$, they are essentially similar due to the random choice of $Q$.
    On the other hand, let $m=2$, $G_1,G_2\sim_{iid} \text{ER}(\frac\cc2)$ and $H_1,H_2\sim_{iid} \IER(Q)$ with $Q$ being random as before.
    Although $H_1,H_2$ are independent conditioned on the choice of $Q$, the two graphs are mutually dependent without the knowledge of $Q$ and hence, the population $\{H_1,H_2\}$ does not have the same distribution as $\{G_1,G_2\}$. 
\end{proof}

We continue with the proof of Theorem~\ref{thm_frotest}.
The lower bounds in the second and third statements of Theorem~\ref{thm_frotest} follow from condition (i) above by accounting for the conditions on $\cc$ and noting $C_1\geq1$ and $\ell_\eta \leq \sqrt{\ln 5}$.
For the upper bounds in first two statements, note that $\sc^*\leq n\cc$ trivially holds since $\Vert P-Q\Vert_F \leq n(\Vert P\Vert_{max} \vee \Vert Q\Vert_{max})$.
To derive the upper bound for the third case, we construct the following two-sample test.
Let $A_{G_1},\ldots,A_{G_m}$ and $A_{H_1},\ldots,A_{H_m}$ be the adjacency matrices of the $2m$ graphs.
We define
\begin{align}
\mhat &= \sum_{\substack{i,j=1 \\ i<j}}^n
\left(\sum\limits_{k\leq  m/2} (A_{G_k})_{ij} - (A_{H_k})_{ij}\right) \left(\sum\limits_{k> m/2} (A_{G_k})_{ij} - (A_{H_k})_{ij}\right),
\label{eqn_mhat} 
\\
\shat &= \sqrt{\sum_{\substack{i,j=1 \\ i<j}}^n
\left(\sum\limits_{k\leq  m/2} (A_{G_k})_{ij} + (A_{H_k})_{ij}\right) \left(\sum\limits_{k> m/2} (A_{G_k})_{ij} + (A_{H_k})_{ij}\right)},
\label{eqn_shat}
\end{align}
and consider the test
\begin{equation}
\Psi_F = \1\left\{ \frac{\mhat}{\shat}>  \frac{t_1}{\sqrt\eta} \right\} \cdot \1\left\{ \shat > \frac{t_2}{\eta^{3/2}} \right\} 
\label{eqn_fro_test}
\end{equation}
for some positive constants $t_1,t_2$, where $\1\{\cdot\}$ is the indicator function. We state the following guarantee for $\Psi_F$.

\begin{lem}[Sufficient conditions for detecting Frobenius norm separation]
\label{thm_frotest_UB}
Consider the testing problem with $d(P,Q) = \Vert P-Q\Vert_F$ and the test $\Psi_F$ in~\eqref{eqn_fro_test}. There exist absolute constants $t_1,t_2,C$ and $C'$ such that for any $\eta\in(0,1)$ and $\cc\in(0,1)$, if 
\begin{equation}
m\geq2 \qquad\text{and} \qquad \sc \geq \displaystyle \sqrt{\frac{C}{\eta}\frac{n\cc}{m}} ~\bigvee~ \frac{C'}{\eta^{3/2}m}, 
\label{eqn_fro_test_UB}
\end{equation}
then $R(\Psi_F,n,m,d,\sc,\cc) \leq \eta$.
\end{lem}

\begin{proof}[Proof sketch]
  The proof is based on concentration statements for $\widehat{\mu}$ and $\widehat{\sigma}$ derived via Chebyshev's inequality using
  $$\mu=\E[\mhat] = \frac{m^2}{8}\Vert P-Q\Vert_F^2,~~\sigma^2=\E[\shat^2] = \frac{m^2}{8}\Vert P+Q\Vert_F^2$$
  and bounds for $\Var[\mhat]$ and $\Var[\shat^2]$ in terms of $\Vert P-Q\Vert_F$ and $\Vert P+Q\Vert_F$.\\
  For the type-I-error, i.e. under $\Hc_0$, these concentration bounds lead to large enough choices for $t_1$ and $t_2$ such that at least one of the events
  $$\left\{ \frac{\mhat}{\shat}>  \frac{t_1}{\sqrt\eta} \right\}~~\mathrm{and}~~\left\{ \shat > \frac{t_2}{\eta^{3/2}} \right\}$$
  has small probability, which bounds the probability of the event $\{\Psi_F=1\}$.
  On the other hand, by construction, in order to control the type-II-error rate we want to ensure that both the events
  $$\left\{ \frac{\mhat}{\shat}\leq  \frac{t_1}{\sqrt\eta} \right\}~~\mathrm{and}~~\left\{ \shat \leq \frac{t_2}{\eta^{3/2}} \right\}$$
  have small probability under $\Hc_1$. Now, the requirement in \eqref{eqn_fro_test_UB} on $\rho$ guarantees that
  $$\mu\gtrsim\sigma\gtrsim\frac{1}{\sqrt{8}\eta^{3/2}},$$
  which allows us to rewrite these two events as large deviation statements of $\mhat$ and $\shat^2$ from their means as required.
\end{proof}
To conclude the proof of Theorem~\ref{thm_frotest}, observe that if $\cc\geq \frac{C'^2}{C\eta^2 m n}$, then the $\sqrt\frac{n\cc}{m}$ term in the above result dominates and we obtain the upper bound in the  third statement of Theorem~\ref{thm_frotest} with $C_1 = \frac{C'^2}{C}$.
Hence the theorem holds.

\subsection{Further remarks on Theorem~\ref{thm_frotest}}

As part of the proof of Theorem~\ref{thm_frotest}, we propose a test in~\eqref{eqn_fro_test} which provides the non-trivial upper bound in the third statement of Theorem~\ref{thm_frotest}.
Though this bound matches the corresponding lower bound up to factors of $\eta$, the difference is rather large with respect to $\eta$.
This is an artefact of the proof of Lemma~\ref{thm_frotest_UB} which is based on Chebyshev's inequality, and its effect can also be seen in the two thresholds $\frac{t_1}{\sqrt{\eta}}$ and $\frac{t_2}{\eta^{3/2}}$ defined in~\eqref{eqn_fro_test}, which can be very high for small $\eta$ limiting the practical usefulness of the test.
Below, we show that this can be improved using more refined concentration inequalities, but provides a sufficient condition that is weaker by a factor of $\ln n$.

\begin{prop}[Improving dependence on $\eta$]
\label{thm_frotest_UB2}
Consider the two sample problem of Theorem~\ref{thm_frotest} and assume $m\geq2$. Define the test
\begin{equation}
\Psi'_F = \1\left\{ \frac{\mhat}{\shat}>  t_1\ln\left(\frac{2}{\eta}\right) \ln\left(\frac{n}{\eta}\right) \right\} \cdot \1\left\{ \shat > t_2\ln^2\left(\frac{2}{\eta}\right)\ln\left(\frac{n}{\eta}\right) \right\} ,
\label{eqn_fro_test2}
\end{equation}
where $\mhat,\shat$ are as in~\eqref{eqn_mhat}--\eqref{eqn_shat}.
There exist constants $t_1,t_2,C$ and $C'$ such that for any $\eta\in(0,1)$ and $\cc\in(0,1)$, 
the test in~\eqref{eqn_fro_test2} has a risk at most $\eta$ whenever
\begin{equation}
\sc \geq \displaystyle C\ln\left(\frac{2}{\eta}\right)\sqrt{\frac{n\cc}{m}\ln\left(\frac{n}{\eta}\right)} ~\bigvee~ \frac{C'}{m}\ln^2\left(\frac{2}{\eta}\right)\ln\left(\frac{n}{\eta}\right).
\label{eqn_fro_test2_rho}
\end{equation}
\end{prop}

\begin{proof}[Proof Sketch]
    The proof is very close in spirit to that of Lemma \ref{thm_frotest_UB} above, but it is based on a much more involved concentration statement than Chebyshev's inequality (stated in the appendix). As a result, the test is based on the same test statistic $\frac{\mhat}{\shat}$ and differs from~\eqref{eqn_fro_test} only in the choice of thresholds.
    
    More specifically, due to the fact that $\mhat$ and $\shat^2$ are sums of products of sums with strong independence properties, we can derive concentration inequalities with logarithmic dependence on $\eta$ for them by repeated application of Bernstein's inequality. However, this comes with the additional $\ln n$ factor which can be seen in equation \eqref{eqn_fro_test2_rho} above.
\end{proof}

A key feature of both tests is that they are adaptive, that is, they do not require specification of the sparsity parameter $\cc$.
We highlight the importance of this property in the following remark.

\begin{rem}[Adaptivity of proposed tests]
\label{rem_adaptive}
The testing problem in~\eqref{eqn_prob}--\eqref{eqn_prob_sets} is defined with respect to the sparsity parameter $\cc$, which in turn governs the minimax separation rate $\sc^*$.
It is not hard to convince one that for any $P,Q$, it is impossible to estimate $\Vert P\Vert_{max}\vee \Vert Q\Vert_{max}$ from few observations (small $m$), and setting $\cc=1$ is clearly sub-optimal for sparse graph models.
Hence, it is desirable to construct tests that do not require knowledge of $\cc$, and both tests in~\eqref{eqn_fro_test} and~\eqref{eqn_fro_test2} are adaptive in this sense.
Adaptivity of these tests are achieved by estimating $\Vert P+Q\Vert_F$, which is a lower bound for $2n\cc$.
\end{rem}

\section{Testing for separation in operator norm}
\label{sec_operator}

In this section, we study the two sample testing problem where $d(P,Q) = \Vert P-Q\Vert_{op}$, and provide bounds on the minimax separation $\sc^*$ for all $m\geq1$.
An interesting finding of this section is that one can obtain a non-trivial minimax separation rate even for $m=1$, that is, the problem is indeed solvable even with a single observation from each model.
Our main result is the following.

\begin{thm}[$\sc^*$ for operator norm separation]
\label{thm_optest}
Consider the two-sample problem~\eqref{eqn_prob}--\eqref{eqn_prob_sets}  with $d(P,Q) = \Vert P-Q\Vert_{op}$, and any $m\geq1$, $\cc\in(0,1)$. Let $\eta\in(0,1)$ and $\ell'_\eta = \frac{\ell_\eta}{\sqrt{8}} \wedge \frac{1}{16}$.
There exist constants $C,C'\geq1$ such that:
\begin{enumerate}
\item $\displaystyle \frac{n\cc}{4} \leq \sc^* \leq n\cc\;$ for $\cc \leq \displaystyle\frac{\ell'^2_\eta}{16 m n}\;$, and
\vskip1ex
\item $\displaystyle \ell'_\eta\sqrt{\frac{n\cc}{m}} \leq \sc^* \leq \sqrt{\frac{n\cc}{m}}\left(C\sqrt{\ln\left(\frac{n}{\eta}\right)} ~\bigvee~ \frac{4C'}{\ell'_\eta}\ln\left(\frac{n}{\eta}\right)\right) \;$ otherwise. 
\end{enumerate}
\end{thm}

The theorem shows that the problem is not solvable in the ultra-sparse regime, that is, $\cc \lesssim \frac{1}{mn}$.
However, beyond this regime there is a non-trivial separation rate, which Theorem~\ref{thm_optest} finds up to a factor of $\ln n$.
It is natural to ask whether the additional logarithmic factor is necessary. 
Later in this section, we refine Theorem~\ref{thm_optest} to remove the $\ln n$ term in the upper bound (see Corollary~\ref{cor_optest_improved}).
This is achieved by using a non-adaptive test, which has prior knowledge of $\cc$ (see Proportion~\ref{thm_optest_UB2}).

\subsection{Proof of Theorem~\ref{thm_optest}}
The lower bounds in the theorem are due to the following necessary condition.

\begin{lem}[Necessary condition for  detecting operator norm separation]
\label{thm_optest_LB}
For the testing problem~\eqref{eqn_prob}--\eqref{eqn_prob_sets} with $d(P,Q) = \Vert P-Q\Vert_{op}$, $\cc\in(0,1)$ and $m\geq1$,  and for any $\eta\in(0,1)$, the minimax risk~\eqref{eqn_minimaxrisk}  is at least $\eta$ if 
\begin{displaymath}
\sc < \frac{n\cc}{4} \bigwedge \ell'_\eta \sqrt{\frac{n\cc}{m}} \;.
\end{displaymath}
\end{lem}

\begin{proof}[Proof sketch]
    The proof follows the technique of~\citet{Baraud_2002_jour_Bernoulli} to derive minimax lower bounds as used in the previous results in this paper.
    Hence, we mainly focus on the choice of $P,Q$ used in the present proof.
    We set $P=Q$ corresponding to ER model with edge probability $\frac\cc2$ under $\Hc_0$. 
    For $\Hc_1$, we set the same $P$, but $Q$ is randomly chosen in the following way. 
    We partition the vertices randomly into two groups, and set $Q_{ij} = \frac\cc2 + \gamma$ for $i,j$ belonging to the same group, and $Q_{ij} = \frac\cc2 - \gamma$ otherwise.
    The random choice of $Q$ is due to randomly sampling one of $2^{n-1}$ possible splits of the vertex set.
    
    One can see that for each choice of $Q$, we have $\Vert P-Q\Vert_{op} = \gamma(n-1)$.
    Hence, a choice of $\gamma \in \left(\frac{\sc}{n-1}, \frac\cc2\right]$ implies that the pair of $(P,Q)$ for every choice of $Q$ lies in $\Omega_1$.
    Now similar to the proof of Lemma~\ref{thm_frotest_LB}, we show for any $m\geq1$ and $\cc \gtrsim \frac{1}{mn}$, there is an appropriate choice $\gamma < \frac\cc2$ for which the random choice of $(P,Q)\in\Omega_1$ cannot be distinguished from the null case. 
    If $\cc \lesssim \frac{1}{mn}$, then the same situation occurs even for the choice $\gamma=\frac\cc2$, which leads to the claim.
\end{proof}

We now prove the upper bounds in Theorem~\ref{thm_optest}.
For this, we note that $\Vert P-Q\Vert_{op} \leq \Vert P-Q\Vert_{row} \leq n\cc$ is the trivial upper bound for $\sc^*$. 
To prove the non-trivial upper bound, we consider the following test:
\begin{equation}
\Psi_{op} = \1\left\{ \frac{\Vert S^- \Vert_{op}}{\sqrt{\Vert S^+ \Vert_{row}}} > t_1 \sqrt{\ln\left(\frac{n}{\eta}\right)} \right\}\cdot \1\left\{ \Vert S^+\Vert_{row} > t_2\ln\left(\frac{n}{\eta}\right)\right\} ,
\label{eqn_op_test}
\end{equation}
where
\begin{equation}
S^- = \sum_{k=1}^m A_{G_k} - A_{H_k} \qquad \text{and} \qquad S^+ = \sum_{k=1}^m A_{G_k} + A_{H_k} \,.
\label{eqn_Spm}
\end{equation}

We have the following sufficient condition for the two-sample test $\Psi_{op}$.
\begin{lem}[Sufficient condition for detecting operator norm separation]
\label{thm_optest_UB}
Consider the testing problem with $d(P,Q) = \Vert P-Q\Vert_{op}$ and the test $\Psi_{op}$ in~\eqref{eqn_op_test}. There exist absolute constants $t_1,t_2,C$ and $C'$ such that for any $\eta\in(0,1)$, $\cc\in(0,1)$ and $m\geq1$, the risk $R(\Psi_{op},n,m,d,\sc,\cc) \leq \eta$ if 
\begin{displaymath}
\sc \geq \displaystyle C\sqrt{\frac{n\cc}{m}\ln\left(\frac{n}{\eta}\right)} ~\bigvee~ \frac{C'}{m}\ln\left(\frac{n}{\eta}\right) \,.
\end{displaymath}
\end{lem}

\begin{proof}[Proof sketch]
    The proof mainly controls the type-I and type-II error rates for $\Psi_{op}$. 
    We achieve this through the matrix Bernstein inequality~\citep{Tropp_2012_jour_FOCM,Oliveira_2009_arxiv_0600}, which in the present case guarantees that
    \begin{align*}
        \Vert S^- - m(P-Q)\Vert_{op} \lesssim \sqrt{m \Vert P+Q\Vert_{row} \ln n}
    \end{align*}
    with high probability if $\Vert P +Q\Vert_{row} \gtrsim \frac{\ln n}{m}$. 
    We also use a Bernstein-type concentration result to ensure that $\Vert S^+\Vert_{row} \asymp m\Vert P +Q\Vert_{row}$ for $\Vert P +Q\Vert_{row}\gtrsim \frac{\ln n}{m}$, and $\Vert S^+\Vert_{row} \lesssim \ln n$ otherwise. 
    
    We bound the type-I error by noting that under $\Hc_0$, that is $P=Q$, the above concentration results imply that with high probability:
    \begin{itemize}
        \item $\Vert S^-\Vert_{op} \lesssim \sqrt{\Vert S^+\Vert_{row} \ln n}$ for $\Vert P +Q\Vert_{row} \gtrsim \frac{\ln n}{m}$, and
        \item $\Vert S^+\Vert_{row} \lesssim \ln n$ for $\Vert P +Q\Vert_{row} \lesssim \frac{\ln n}{m}$.
    \end{itemize}
    Hence, we have $\Psi_{op}=0$ with high probability for a suitable choices of $t_1, t_2$. 
    
    On the other hand, we control the type-II error in the following way.
    Under $\Hc_1$, we have $\Vert P-Q\Vert_{op} > \sc$ with $\sc \geq  \frac{C'}{m}\ln \left(\frac{n}{\eta}\right)$,  which implies that 
    \begin{align*}
        \Vert P +Q\Vert_{row} \geq \Vert P-Q\Vert_{op} \gtrsim \frac{\ln n}{m}\,.
    \end{align*}
    So the second indicator in $\Psi_{op}$ is guaranteed to be 1 for $C'$ large enough.
    This condition also allows us to use the matrix Bernstein inequality which, along with the reverse triangle inequality, gives
     \begin{align*}
        \Vert S^-\Vert_{op}
        &\geq m\Vert P-Q\Vert_{op} - \Vert S^- - m(P-Q)\Vert_{op}
        \\&\gtrsim m\sc - \sqrt{m\Vert P+Q\Vert_{row}\ln n}.
    \end{align*}
    We use the stated assumption $\sc \geq C\sqrt{\frac{n\cc}{m}\ln\left(\frac{n}{\eta}\right)}$ and the fact $\Vert S^+ \Vert_{row} \asymp m\Vert P+Q\Vert_{row}$ to prove that the first indicator in $\Psi_{op}$ is also true with high probability for large enough $C$. This bounds the type-II error rate.
\end{proof}

To conclude the proof of Theorem~\ref{thm_optest}, we note that the upper bound in the second statement is obtained by observing that $\frac{C'}{m} \leq \frac{4C'}{\ell'_\eta}\sqrt{\frac{n\cc}{m}}$ for $\cc \geq \frac{\ell'^2_\eta}{16mn}$.
We note that the above analysis of the test $\Psi_{op}$ is based on the matrix Bernstein inequality~\citep{Tropp_2012_jour_FOCM}, which makes the upper bound worse factor of $\ln n$.
One may alternatively use other concentration results based on refinements of the trace method~\citep{Lu_2013_jour_EJComb,Bandeira_2016_jour_AnnProb} to obtain slightly different sufficient conditions in Lemma~\ref{thm_optest_UB}.
For instance, an use of \citet[][Corollary 3.9]{Bandeira_2016_jour_AnnProb} does provide a rate of $\sqrt{\frac{n\cc}{m}}$, but such bounds hold only if $\cc\gtrsim \frac{\textup{polylog}(n)}{n}$.

\subsection{An optimal non-adaptive test}
One of the objectives of this work is to determine whether it is possible to test between two large graphs, that is, the case of $m=1$.
Theorem~\ref{thm_optest} provides an affirmative answer to this question, but our proposed test $\Psi_{op}$~\eqref{eqn_op_test} is not optimal since the sufficient condition on $\sc$ is worse than the necessary condition by a factor of $\ln n$.
As we note above, this is a consequence of our use of matrix Bernstein inequality in the proof of Lemma~\ref{thm_optest_UB}, and leads to the question whether one can improve the result by using more sharp concentration techniques known in the case of random graphs.
We show that this is indeed true, at least for $m=1$, and can be shown using concentration of trimmed or regularised adjacency matrices~\citep{Le_2015_arxiv}. 

Assume $m=1$, and let $G\sim\IER(P)$ and $H\sim\IER(Q)$ be the two random graphs. 
Also assume that $\cc$ is specified.
For some constant $c\geq6$, let $A'_G$ be the adjacency matrix of the graph obtained by deleting all edges in $G$ that are incident on vertices with degree larger than $cn\cc\ln(\frac2\eta)$. Similarly, we obtain $A'_H$ from $H$.
Define a test as
\begin{equation}
\Psi'_{op} = \1\left\{ \Vert A'_G - A'_H \Vert_{op} > t\sqrt{n\cc}\ln^2\left(\frac2\eta\right)\right\}
\label{eqn_op_test2}
\end{equation}
for some constant $t>0$. We have the following guarantee for $\Psi'_{op}$.

\begin{prop}[Optimality for $m=1$]
\label{thm_optest_UB2}
Consider the testing problem with $d(P,Q) = \Vert P-Q\Vert_{op}$, $m=1$ and $\cc > \frac{10}{n}$.
There exist absolute constants $c,t$ such that for any $\eta\in(0,1)$, the maximum risk of $\Psi'_{op}$ is at most $\eta$ if 
\begin{displaymath}
\sc \geq \displaystyle 2t\sqrt{n\cc}\ln^2\left(\frac{2}{\eta}\right) \,.
\end{displaymath}
Hence, assuming $\eta$ is fixed, $\sc^*\asymp \sqrt{n\cc}$ for $\cc> \frac{10}{n}$ whereas $\sc^* \asymp n\cc$ for $\cc\leq\frac{10}{n}$.
\end{prop}

\begin{proof}[Proof sketch]
    The proof differs from that of Lemma~\ref{thm_optest_UB} in the use of a different matrix concentration inequality.
    We rely on the concentration of the trimmed adjacency matrix~\citep{Le_2015_arxiv}. 
    In the present context, the result states that for $\cc>\frac{10}{n}$ and $G\sim\IER(P)$ with $\Vert P\Vert_{max} \leq \cc$, 
    \begin{align*}
        \Vert A'_G - P\Vert_{op} \leq C\sqrt{n\cc}\ln^2\left(\frac{2}{\eta}\right)
    \end{align*}
    with probability $1-\frac{\eta}{4}$ for some absolute constant $C>0$. 
    
    For $P=Q$, we have with probability $1-\frac\eta2$,  
    \begin{align*}
        \Vert A'_G - A'_H \Vert_{op} \leq \Vert A'_G - P\Vert_{op} + \Vert A'_H - Q\Vert_{op} \leq 2C\sqrt{n\cc}\ln^2\left(\frac{2}{\eta}\right),
    \end{align*}
    and hence, setting $t=2C$ ensures that the type-I error rate is smaller than $\frac\eta2$.
    Similarly, it follows that under $\Hc_1$,
    \begin{align*}
        \Vert A'_G - A'_H \Vert_{op} &\geq \Vert P-Q\Vert_{op} - \Vert A'_G - P\Vert_{op} + \Vert A'_H - Q\Vert_{op} 
        \\&\geq \sc - 2C\sqrt{n\cc}\ln^2\left(\frac{2}{\eta}\right),
    \end{align*}
    with probability $1-\frac\eta2$.
    Hence, $\Vert A'_G - A'_H \Vert_{op}$ exceeds the threshold for $\sc>4C\sqrt{n\cc}\ln^2\left(\frac{2}{\eta}\right)$, and we have a similar bound on the type-II error rate.
\end{proof}

While $\Psi'_{op}$ indeed achieves optimality, it comes at the cost of prior knowledge of $\cc$. 
As we note in Remark~\ref{rem_adaptive}, this is an unreasonable restriction and hence, the sub-optimal test $\Psi_{op}$~\eqref{eqn_op_test} may be preferable due to its adaptivity.
We do not know if an adaptive optimal operator norm based test can be constructed since existing concentration bounds for sparse IER graphs are typically in terms of the largest edge probability, which is hard to estimate.

\subsection{Improving the bounds in Theorem~\ref{thm_optest}}

As discussed above, the non-adaptive test $\Psi'_{op}$ helps in improving the upper bound on $\sc^*$ for $m=1$.
We have not studied whether $\Psi'_{op}$~\eqref{eqn_op_test2} and Proposition~\ref{thm_optest_UB2} extend to the case of $m>1$.
However, the following corollary shows that the results of Sections~\ref{sec_frobenius} and~\ref{sec_operator} can be combined to remove the undesirable $\ln n$ factor in the upper bound of $\sc^*$ in Theorem~\ref{thm_optest} for any $m\geq1$.
For convenience, let the allowable risk $\eta$ be a constant, which we ignore in the statement of the result.

\begin{cor}[Improved bounds on $\sc^*$ for operator norm separation]
\label{cor_optest_improved}
Consider the two-sample problem~\eqref{eqn_prob}--\eqref{eqn_prob_sets}  with $d(P,Q) = \Vert P-Q\Vert_{op}$, and any $m\geq1$, $\cc\in(0,1)$. 
There exists a constant $C>0$ such that:
\begin{center}
$\displaystyle \sc^* \asymp n\cc\;$ for $\cc \leq \displaystyle\frac{C}{m n} \;,\qquad$ and
$\qquad \displaystyle \sc^* \asymp \sqrt{\frac{n\cc}{m}}\;$ otherwise.
\end{center}
\end{cor}

\begin{proof}[Proof]
    The lower bounds on $\sc^*$ follow from Theorem~\ref{thm_optest}, and hence it remains to prove that
    \begin{displaymath}
     \sc^* \lesssim n\cc \bigwedge \sqrt{\frac{n\cc}{m}} \qquad \text{for all } m\geq1.
    \end{displaymath}
    We claim that this is a direct consequence of the upper bounds in Proposition~\ref{thm_optest_UB2} and Theorem~\ref{thm_frotest}.
    To see this, consider the following test:
    \begin{center}
        $\Psi = \Psi'_{op}$ for $m=1$, $\qquad$ and  $\qquad \Psi = \Psi_F$ for $m\geq2$.
    \end{center}
    For $m=1$, Proposition~\ref{thm_optest_UB2} leads to the above stated upper bound.
    
    For $m\geq2$, note that given $\cc$ and $\sc$, the two-sample problems under Frobenius and operator norm separation have the same null set $\Omega_0$~\eqref{eqn_prob_sets},
    whereas the sets under alternative hypotheses are
    \begin{align*}
        \Omega_1^{op}&= \{(P,Q): \Vert P-Q\Vert_{op}>\sc, \Vert P\Vert_{max} \vee \Vert Q\Vert_{max} \leq \cc\}, \text{ and}
        \\
        \Omega_1^{F} &= \{(P,Q): \Vert P-Q\Vert_F>\sc, \Vert P\Vert_{max} \vee \Vert Q\Vert_{max} \leq \cc\}.
    \end{align*} 
    We use the relation $\Vert P-Q\Vert_F \geq \Vert P-Q\Vert_{op}$ to observe that $\Omega_1^{op} \subset \Omega_1^{F}$. As a consequence, the maximum risk of $\Psi_F$  under the two different distances can be related as
    \begin{align*}
        R(\Psi_F,n,m,\Vert\cdot\Vert_{op},\sc,\cc) \leq R(\Psi_F,n,m,\Vert\cdot\Vert_{F},\sc,\cc),
    \end{align*}
    that is, if $\sc$ is fixed and $\Psi_F$ is used for the problem of testing difference in operator norm, then it achieves a smaller worst-case risk than what it achieves for the the problem of testing difference in Frobenius norm.
    This implies that the minimax separation for operator norm is at most the minimax separation for Frobenius norm. 
    The claim now follows from the upper bounds in Theorem~\ref{thm_frotest} for $m\geq2$.
\end{proof}

\section{Discussion}
\label{sec_discussions}

In this section, we discuss related graph two-sample problems, and comment on practical two-sample tests.

\subsection{A note on the sparsity condition}

The notion of sparsity has no formal definition in the context of random graphs, unlike sparsity in the signal detection literature.
The informal definition of a sparse graph is a graph where the number of edges are not arbitrarily large compared to the number of vertices $n$.
The sparsity condition used in~\eqref{eqn_prob_sets}, that is $\Vert P\Vert_{max} \leq \cc$, is one approach to define sparse graphs. More precisely, this condition implies that the expected number of edges is at most $n^2\cc$, and in particular, for $\cc\asymp \frac1n$, we obtain graphs where the number of edges grows linearly with $n$.
However, one can induce sparsity through alternative restrictions on $P$:
\begin{align*}
&\text{(i)}~~\sum_{ij} P_{ij} \leq \cc_1  &&\text{(iii)}~~ \Vert P \Vert_{row} \leq \cc_3
\\&\text{(ii)}~~ \Vert P\Vert_F \leq \cc_2  &&\text{(iv)}~~ \Vert P \Vert_{0} \leq \cc_4
\end{align*}
among others. 
We note the $\cc_i$'s are of different order than $\cc$ used in~\eqref{eqn_prob_sets}.
The first condition bounds the expected number of edges, while condition (iii) provides graphs with bounded expected degrees.
The last restriction is along the lines of the signal detection literature since it implies that at most $\cc_4$ entries in $P$ are non-zero, and results in random graphs with absolutely bounded number of edges (not only in expectation).

It is natural to ask to whether our results also extend to the case where sparsity is controlled through any one of the conditions.
We do not provide a complete characterisation in each case, but present two corollaries of Theorems~\ref{thm_frotest} and~\ref{thm_optest} which show that some of our results easily extend to alternative notions of sparsity.

\begin{cor}[$\sc^*$ under Frobenius norm sparsity]
Consider the problem of testing between the following two hypotheses sets
\begin{align*}
\Omega_0 &= \{ P=Q, ~\Vert P\Vert_{F} \leq \cc_2\}, \qquad\textup{and}
\\
\Omega_1 &= \{ \Vert P-Q\Vert_F > \sc, ~\Vert P\Vert_{F} \vee \Vert Q\Vert_{F} \leq \cc_2\}. 
\end{align*}
The bounds on minimax separation stated in Theorem~\ref{thm_frotest} hold in this case with the substitution $\cc_2 = n\cc$. 
\end{cor}

\begin{proof}[Proof sketch]
The proof of the corollary follows immediately from that of Theorem~\ref{thm_frotest}. 
The substitution $\cc_2=n\cc$ is a consequence of the relation $\Vert P\Vert_F \leq n \Vert P\Vert_{max}$.
\end{proof}

We also have a result in the case of graphs with bounded expected degrees, which can be similarly derived from Theorem~\ref{thm_optest}.
We do not know if the more precise rates given in Corollary~\ref{cor_optest_improved} can be extended to this setting.

\begin{cor}[$\sc^*$ under row sum norm sparsity]
Consider the problem of testing between the following two hypotheses sets
\begin{align*}
\Omega_0 &= \{ P=Q, ~\Vert P\Vert_{row} \leq \cc_3\}, \qquad\textup{and}
\\
\Omega_1 &= \{ \Vert P-Q\Vert_{op} > \sc, ~\Vert P\Vert_{row} \vee \Vert Q\Vert_{row} \leq \cc_3\}. 
\end{align*}
The bounds on minimax separation stated in Theorem~\ref{thm_optest} hold in this case with the substitution $\cc_3 = n\cc$. 
\end{cor}

\subsection{On the practicality of proposed tests}
The theme of this paper has been to explore different separation criteria for which the graph two sample testing problem can be solved for a small population size.
In the process of addressing this question, we suggest adaptive tests $\Psi_F$~\eqref{eqn_fro_test}, $\Psi'_F$~\eqref{eqn_fro_test2} and $\Psi_{op}$~\eqref{eqn_op_test} that also turn out to be near-optimal for the problems of detecting separation in Frobenius or operator norms.
However, the practical applicability of these tests have not been discussed so far.

We note that in practice, one is more interested in the testing problem $P=Q$ vs. $P\neq Q$, and hence, a more basic question that needs to be addressed is --- Which separation criterion should be used?
The findings of this paper suggest that for $m=1$, operator norm separation is a possible choice, whereas other distances like total variation distance and Frobenius norm should not be considered.
For $m>1$ but small, we show that one could detect separation in Frobenius norm using $\Psi_F$~\eqref{eqn_fro_test} or detect separation in operator norm using $\Psi_{op}$~\eqref{eqn_op_test}.
We compare the relative merits of both tests in terms of sample complexity in the following way.

\begin{rem}[Comparison between $\Psi_F$ and $\Psi_{op}$]
\label{rem_compare}
Consider $P$ and $Q$ such that $P\neq Q$ and $\Vert P\Vert_{max}\vee \Vert Q\Vert_{max} \leq \cc$.
Ignoring constants and terms involving $\eta$, Lemma~\ref{thm_frotest_LB} shows that  $m\gtrsim \frac{n\cc}{\Vert P-Q\Vert_F^2}$ samples are necessary to distinguish between the two models.
On the other hand, Lemma~\ref{thm_optest_LB} shows that $m\gtrsim \frac{n\cc}{\Vert P-Q\Vert_{op}^2}$ samples are needed to distinguish between $P,Q$.
Since $\Vert P-Q\Vert_F \geq \Vert P-Q\Vert_{op}$, testing for Frobenius norm separation is easier than testing for separation in operator norm for $m>1$.
In other words, one may expect $\Psi_F$ to have a smaller risk than $\Psi_{op}$.
\end{rem}

However, the tests $\Psi_F$, $\Psi'_F$ and $\Psi_{op}$ require $n$ to be very large or the graphs to be dense to  achieve a small risk, and hence have limited applicability in moderate-sized problems. It is known that the practical applicability of concentration based tests can be improved by using bootstrapped variants~\citep{Gretton_2012_jour_JMLR}, which approximate the null distribution by generating bootstrap samples through random mixing of the two populations. 
Simulations, not included in this paper, show that permutation based bootstrapping provides a reasonable rejection rate for moderate sample size $(m\geq10)$, but such bootstrapping is not effective for smaller $m$, for instance $m=2$.
Furthermore, the relative merit $\Psi_F$ (or $\Psi'_F$) over $\Psi_{op}$, as suggested by Remark~\ref{rem_compare}, could not be verified in case of the bootstrapped variants.

When $m = 1$ and the population adjacency is of low rank, \citet{Tang_2016_jour_JCompGraphStat} suggest an alternative bootstrapping principle based on estimation of the population adjacency $P$ and then drawing bootstrap samples from estimate of $P$. 
Simulations in \citet{Ghoshdastidar_2018_conf_neurips} show that this procedure works to some extent for dense IER graphs but only when $P$ has a small (known) rank. When the rank is unknown or, more generally, if $P$ does not have a low rank, such bootstrapped tests are not necessarily reliable.
 
In the related work~\citep{Ghoshdastidar_2018_conf_neurips}, we explore alternative possibilities for constructing practical tests derived from Frobenius norm or operator norm based test statistics that work even for $m=1,2$.
These tests use statistics similar to the ones studied in the present work, but are based on asymptotic null distributions that hold approximately or under  stronger assumptions.
For instance, \citet{Ghoshdastidar_2018_conf_neurips} show that under $\Hc_0$ and assumptions on the edge density of the graphs, the Frobenius norm based statistic $\frac{\mhat}{\shat}$ (see~\eqref{eqn_mhat}--\eqref{eqn_shat}) is asymptotically dominated by a standard normal random variable as $n\to\infty$.
Based on this, we propose an asymptotic distribution based test that is powerful for all $m\geq2$ and moderately sparse graphs, and is reliable even in the case of real networks. 
For $m=1$, we consider the operator norm of re-scaled version of $A_{G_1}-A_{H_1}$, which approximately follows the Tracy-Widom law under $\Hc_0$ as $n\to\infty$.
The practical applicability of these tests stem from the fact that they do not explicitly rely on concentration inequalities that lead to large thresholds, as in the present paper, nor do they use bootstrapping strategies, which often require large sample sizes or assumptions on the graph model.
Thus, our related work provides more practically useful tests whose statistical guarantees hold either approximately or under additional assumptions.\footnote{The implementations for the practical tests proposed in~\citet{Ghoshdastidar_2018_conf_neurips} as well as bootstrapped variants of tests studied in the present paper are available at: \\ \url{https://github.com/gdebarghya/Network-TwoSampleTesting}}

\subsection{Minimax separation under structural assumptions}

The present paper studies the two-sample problem for $\IER$ graphs, where the population adjacency matrices do not have any structural restriction.
In other words, we study a hypothesis testing problem in a dimension of $\binom{n}{2}$ with a sample size of $m$. 
Under this broad framework, Theorem~\ref{thm_frotest} shows that the minimax separation in Frobenius norm $\sc^*_F$ is given by
\begin{displaymath}
 \sc^*_F \asymp \left\{ \begin{array}{ll}
      n\cc & \text{for } m=1, \text{ and}  \\
      \displaystyle n\cc\bigwedge\sqrt{\frac{n\cc}{m}} & \text{for } m>1.
 \end{array}\right.
\end{displaymath}
Similarly, Corollary~\ref{cor_optest_improved} shows that the minimax separation in operator norm
\begin{displaymath}
 \sc^*_{op} \asymp n\cc \bigwedge \sqrt{\frac{n\cc}{m}} \quad \text{ for all } m\geq1.
\end{displaymath}

It is natural to ask if these rates decrease if we further impose structural assumptions on the population adjacencies thereby effectively reducing the problem dimension. 
In this section, we provide some initial results in this direction.
We impose the structural assumptions by restricting the possible values for the population adjacency matrices.
Formally, we define $\widetilde{\mathbb{M}}_n \subset \Mb$ as the set of symmetric matrices in $[0,1]^{n\times n}$, whose diagonal entries are zero and satisfy additional structural assumptions (specified below).
Subsequently, the graph two-sample problem, restricted to a special graph class, can be stated as the problem of testing between 
\begin{displaymath}
 \Hc_0: (P,Q) \in \Omega_0 \cap \left(\widetilde{\mathbb{M}}_n \times \widetilde{\mathbb{M}}_n\right)   
 \quad \text{vs} \quad
 \Hc_1: (P,Q) \in \Omega_1 \cap \left(\widetilde{\mathbb{M}}_n \times \widetilde{\mathbb{M}}_n\right),
\end{displaymath}
where $\Omega_0$ and $\Omega_1$ are as defined in the original problem~\eqref{eqn_prob}--\eqref{eqn_prob_sets}. 

We begin with the most simple case of Erd\H{o}s-R{\'e}nyi (ER) graphs, where the restricted set $\widetilde{\mathbb{M}}_n$ corresponds to the symmetric matrices  whose diagonal entries are zero and all off-diagonal entries are identical.
Note that each distribution is modelled by a single parameter, and hence, we have a hypothesis testing problem in one dimension.
We present following result on minimax separation for Frobenius and operator norms in this setting.
We simplify the statement by ignoring absolute constants including the allowable risk $\eta$, which is assumed to be fixed.

\begin{prop}[Minimax separation for testing ER graphs]
\label{prop_er}
Consider the graph two-sample problem restricted to ER graphs with specified $n,m$ and $\cc$.
The minimax separation rates for Frobenius norm $\sc^*_F$ and for operator norm $\sc^*_{op}$ satisfy
\begin{displaymath}
 \sc^*_F \asymp \sc^*_{op} \asymp n\cc \bigwedge \sqrt{\frac{\cc}{m}} \qquad \text{for all } m\geq1. 
\end{displaymath}
In particular, in the sparsity regime $\cc \gtrsim \frac{1}{n^2m}$, the minimax separation is much smaller than the corresponding rates for $IER$ graphs.
\end{prop}

\begin{proof}[Proof sketch]
    The proof is relatively simple, and borrows ideas from the Theorem~\ref{thm_frotest}. Hence, we only provide a brief sketch of the proof.
    
    We consider the two-sample problem with $G_1,\ldots,G_m \sim_{iid} \text{ER}(p)$ and $H_1,\ldots,H_m \sim_{iid} \text{ER}(q)$, where $p,q\leq \cc$ denote the edge probabilities in either models.
    Note that the result can be equivalently stated as: the minimax separation between $p$ and $q$ is $\cc \wedge \frac1n\sqrt{\frac{\cc}{m}}$.
    
    To derive a lower bound, we use the approach stated in the previous results and define $p,q$ as follows. Under $\Hc_0$, we set $p=q=\frac\cc2$ whereas under $\Hc_1$, we let $p=\frac\cc2$ and $q$ is randomly selected from $\{p-\gamma,p+\gamma\}$ for some $\gamma\leq \frac\cc2$. 
    It turns out that $\gamma \asymp \frac\cc2 \wedge \frac1n\sqrt{\frac{\cc}{m}}$ is an appropriate choice, which corresponds to the claimed lower bound for minimax separation.
    
    The upper bound is obtained by using a simple test that compares the edge densities estimated from the two population.
    Define a test
    \begin{displaymath}
     \Psi = \1\big\{ \left| \widehat{p} - \widehat{q}\right| > t_1 \big\} \cdot \1\big\{ \left(\widehat{p} + \widehat{q}\right) > t_2 \big\},
    \end{displaymath}
    where  $\widehat{p} = \displaystyle \frac{1}{m\binom{n}{2}} \sum_k \sum_{i<j} (A_{G_k})_{ij}$ and $\widehat{q} = \displaystyle \frac{1}{m\binom{n}{2}} \sum_k \sum_{i<j} (A_{H_k})_{ij}$ are the two edge density estimates, and $t_1,t_2$ are suitably defined thresholds.
    The proof strategy of Lemma~\ref{thm_frotest_UB} combined with judicious choice of thresholds provides the desired upper bound claim in the result. 
\end{proof}

We next increase the complexity of the problem by considering stochastic block model with 2 classes (2-SBM).
This class of graphs has been studied in the context of one-sample hypothesis testing, particularly for detecting community structure in graphs and estimating the number of communities in a block model~\citep{Bickel_2016_jour_JRStatistSocB,Lei_2016_jour_AnnStat,Gao_2017_arxiv_06742}.
We consider typical 2-SBM graphs, which are characterized by a binary vector denoting the communities of the nodes as well as the \textit{within community} and \textit{inter community} edge probabilities.
Formally, this is represented by the set $\widetilde{\mathbb{M}}_n \subset \Mb$ of matrices which can be transformed to have a 2 classes block structure through row/column permutations.
Furthermore, the off-diagonal entries in each matrix can take at most two distinct values. 
It is easy to observe that each matrix is governed by $n+2$ parameters, and the problem has a dimension much smaller than $\binom{n}{2}$-dimensional $\IER$ problem.
However, the following result shows that the minimax separation does not decrease in this case compared to the general $\IER$ setting.

\begin{prop}[Minimax separation for testing 2-SBM graphs]
\label{prop_sbm}
Consider the graph two-sample problem restricted to 2-SBM graphs with specified $n,m$ and $\cc$.
The minimax separation for operator norm is
\begin{displaymath}
 \sc^*_{op} \asymp n\cc \bigwedge \sqrt{\frac{n\cc}{m}} \qquad \text{for all } m\geq1. 
\end{displaymath}
For minimax separation in Frobenius norm $\sc^*_F$, the above rate hold only for $m\geq2$.
For $m=1$, we have loose bounds $n\cc \wedge \sqrt{n\cc} \lesssim \sc^*_F \lesssim n\cc$.

Hence, the minimax separation for 2-SBM is similar to the corresponding rates for $IER$ graphs (with possible exception of $\sc^*_F$ for $m=1$).
\end{prop}

\begin{proof}[Proof sketch]
    We first note that the testing problem is a restriction of the original $\IER$ testing problem, and hence, the upper bounds of $\sc^*_F$ and $\sc^*_{op}$, derived in Theorem~\ref{thm_frotest} and  Corollary~\ref{cor_optest_improved} respectively, also hold in this case. Hence, we only need to prove the lower bounds
    \begin{displaymath}
     \sc^*_{op} \gtrsim n\cc \bigwedge \sqrt{\frac{n\cc}{m}} \qquad \text{and} \qquad \sc^*_F \gtrsim n\cc \bigwedge \sqrt{\frac{n\cc}{m}} 
    \end{displaymath}
    for all $m\geq1$.
    This lower bound on $\sc^*_{op}$ follows directly from the proof of Lemma~\ref{thm_optest_LB}. Recall that the construction used to prove Lemma~\ref{thm_optest_LB} was essentially a case of distinguishing a 2-SBM from an ER, where the latter is also a special case of a 2-SBM.
    Hence, the same proof works in the present case as well, and the claimed lower bound for $\sc^*_{op}$ holds.
    
    The lower bound for $\sc^*_F$ also follows from the same construction and computing the Frobenius norm distance between the choice of population adjacency matrices used in proof of Lemma~\ref{thm_optest_LB}.
\end{proof}

Proposition~\ref{prop_sbm} may have important consequences since it apparently implies that for any model that is more complex than the simple 2-SBM, the two-sample problem is as difficult as the general setting of $\IER$ graphs.
However, a more in-depth study may be required before a strong claim can be made in this context.
For instance, one should take into account the fact that the broad literature on graph clustering and stochastic block model often require an additional assumption of balanced community size. It would be interesting to understand whether the presence of balanced communities also simplify the detection / testing problems.

In a broader context, one should note that Propositions~\ref{prop_er} and~\ref{prop_sbm} only provide minimax rates for the Frobenius and operator norm distances for these special classes of $\IER$ graphs. 
On the other hand, Proposition~\ref{prop_tv} and~\ref{prop_skl} demonstrate the limitation of distribution based distances when no restriction is imposed on $\IER$ model.
Hence, there is a possibility that, under special models, total variation and $KL$-divergence based tests are as useful or even better than Frobenius or operator norm based tests.
Insights into these questions, along with minimax rates for related models such as $k$-SBM and random dot product graphs, may provide a clear understanding of the problem of testing graphs on a common set of vertices.

\subsection{Extensions}

Several extensions of the two sample problem~\eqref{eqn_prob}--\eqref{eqn_prob_sets} can be studied.
Earlier in this section, we have discussed the possibility of considering alternative notions of sparsity.
Another interesting, and practically significant, extension is to the case of directed graphs.
The problem naturally extends to this framework, and the proposed adaptive tests easily tackle this generalisation without any critical modification.
For instance, in the case $\Psi_F$, one merely needs to define $\mhat$ and $\shat$ as a summation over all off-diagonal terms and the thresholds change only by constant factors.
The analysis of such tests as well as the minimax lower bounds can be easily derived from our proofs.
The same conclusion is true for the case of operator norm separation, particularly, when the upper bounds are derived based on $\Psi_{op}$ and the matrix Bernstein inequality.

In this paper, we only consider the problem of identity testing, that is, $P=Q$ or $d(P,Q)>\sc$.
One may also study the more general problem of closeness testing, which ignores small differences between the models, that is, one tests between the hypotheses
\begin{align*}
\Omega_0 &= \{ d(P,Q) \leq \epsilon,  ~\Vert P\Vert_{max} \vee \Vert Q\Vert_{max} \leq \cc\}, \qquad\textup{and}
\\
\Omega_1 &= \{  d( P,Q) > \sc, ~\Vert P\Vert_{max} \vee \Vert Q\Vert_{max} \leq \cc\}. 
\end{align*}
for some pre-specified $\epsilon<\sc$.
The proposed tests, which are primarily based on the principle of estimating $d(P,Q)$ may be easily adapted to this setting by appropriately modifying the test thresholds.
However, it is not clear whether the minimax separation bounds in Theorems~\ref{thm_frotest} and~\ref{thm_optest} easily extend to this setting as well.

From a practical perspective, one may face a more general problem of two sample graph testing, where the graphs are not defined on a common set of vertices and may even be of different sizes.
This situation is generally hard to study, but tests for this problem are often used in many applications, where one typically computes some scalar or vector function from each graph and comments on the difference between two graph populations based on this function~\citep{Stam_2007_jour_CerebralCortex}.
We study this principle in a recent work~\citep{Ghoshdastidar_2017_conf_COLT}, and propose a formal framework for testing between any two random graphs through the means of a network function $f : \mathcal{G}_{\geq n} \to \mathcal{M}$ that maps the space of graphs on at least $n$ vertices to some metric space $\mathcal{M}$.
We argue that if the network function concentrates for some sub-class of random graphs as $n\to\infty$, then one can indeed construct two sample tests based on the network function.
However, such a test cannot distinguish between equivalence classes, that is, random graph models that behave identically under the mapping $f$.

\appendix


\section{Proof of results in Section~3}

Before presenting the proofs of Propositions~\ref{prop_tv}--\ref{prop_zero}, we briefly recall the general technique for proving lower bounds in the minimax setting.
Throughout the appendix, we often denote a generic tuple $(P,Q)$ by $\theta$, and use $\omega$ to denote a generic population of $2m$ graphs $G_1,\ldots,G_m,H_1,\ldots,H_m$.

Consider the two sample problem in~\eqref{eqn_prob}--\eqref{eqn_prob_sets}.
Let $\theta_0\in\Omega_0$ be a particular instance satisfying the null hypothesis, and $\Theta_1\subset\Omega_1$ be a finite collection of instances satisfying $\Hc_1$. 
We specify $\theta_0$ and $\Theta_1$ later for each instance of the problem, but to prove a general lower bound, let $\theta_1$ be uniformly selected from $\Theta_1$.
The minimax risk~\eqref{eqn_minimaxrisk} can be bounded from below as
\begin{align*}
R^*(n,m,d,\sc,\cc) 
&\geq 
\inf_\Psi \left(\P_{\theta_0}(\Psi = 1) +  \sup_{\theta\in \Theta_1} \P_\theta(\Psi = 0)\right)
\\&\geq 
\inf_\Psi \big(\P_{\theta_0}(\Psi = 1) +  \E_{\theta_1\sim \text{Unif}(\Theta_1)} [\P_{\theta_1}(\Psi = 0)]\big)
\\&= 
1 + \inf_\Psi \big(\P_{\theta_0}(\Psi = 1) -  \E_{\theta_1\sim \text{Unif}(\Theta_1)} [\P_{\theta_1}(\Psi = 1)]\big)
\\&\geq 
1 - \sup_\Psi \big|\P_{\theta_0}(\Psi = 1) -  \E_{\theta_1\sim \text{Unif}(\Theta_1)} [\P_{\theta_1}(\Psi = 1)]\big| \;.
\end{align*}
Let $\mathcal{F}$ be the collection of all possible sets of $2m$ graphs on $n$ vertices, and let $F_\Psi\subset\mathcal{F}$ be the sub-collection of those instances for which $\Psi=1$.
Then, we can re-write above lower bound as
\begin{align*}
R^*(n,m,d,\sc,\cc)&\geq 
1 - \sup_{F_\Psi} \big|\P_{\theta_0}(F_\Psi) -  \E_{\theta_1\sim \text{Unif}(\Theta_1)} [\P_{\theta_1}(F_\Psi)]\big|
\\&\geq 
1 - \sup_{F\subset\mathcal{F}} \big|\P_{\theta_0}(F) -  \E_{\theta_1\sim \text{Unif}(\Theta_1)} [\P_{\theta_1}(F)]\big| 
\\&= 
1 - \frac12\sum_{\omega\in\mathcal{F}} \big|\P_{\theta_0}(\omega) -  \E_{\theta_1\sim \text{Unif}(\Theta_1)} [\P_{\theta_1}(\omega)]\big| 
\\&\geq
1 - \frac12\sqrt{\sum\limits_{\omega\in\mathcal{F}} \frac{(\E_{\theta_1\sim \text{Unif}(\Theta_1)} [\P_{\theta_1}(\omega)])^2}{\P_{\theta_0}(\omega)} -1}\;,
\end{align*}
Here, $\omega\in\mathcal{F}$ corresponds to a collection of $2m$ graphs. 
The equality follows by observing that both $\P_{\theta_0}(\cdot)$ and $\E_{\theta_1\sim \text{Unif}(\Theta_1)} [\P_{\theta_1}(\cdot)]$ define two measures on $\mathcal{F}$, and hence, the equality is due to equivalence of two definitions of total variation distance.
The last step is a consequence of Cauchy-Schwarz inequality along with the observation that $\sum_\omega \P_\theta(\omega)=1$ for any $\theta$.
Thus, to show that the minimax risk is larger than any $\eta\in(0,1)$, it suffices to show that for some $\theta_0\in\Omega_0$ and $\Theta_1\subset\Omega_1$,
\begin{equation}
\sum\limits_{\omega\in\mathcal{F}} \frac{(\E_{\theta_1\sim \text{Unif}(\Theta_1)} [\P_{\theta_1}(\omega)])^2}{\P_{\theta_0}(\omega)} \leq 1 + 4(1-\eta)^2.
\label{eqn_likeratio}
\end{equation}

\subsection{Proof of Proposition~\ref{prop_tv}}
The upper bound holds trivially.
To prove the lower bound, we consider the following choice of $\theta_0$ and $\Theta_1$.
Let $p=\frac{\cc}{2}$, and $\gamma\in(0,p]$.
Define $\theta_0=(P,Q)$ such that every off-diagonal entry in $P$ and $Q$ equals $p$, that is, both models correspond to Erd\H{o}s-R{\'e}nyi graphs with edge probability $p$.
Let $\Theta_1$ be the collection of all $\theta=(P,Q)$, where $P$ is same as before, but each off-diagonal entry in $Q$ is either $(p+\gamma)$ or $(p-\gamma)$.
Note that due to the symmetry of $Q$, there are exactly $2^{n(n-1)/2}$ elements in $\Theta_1$.

Moreover, $\Theta_1\subset \Omega_1$ if  $\gamma$ is such that $TV(\IER(P),\IER(Q))>\sc$ for all $(P,Q)\in\Theta_1$.
We first derive a condition on $\gamma$ so that this holds. 
We can characterise any $Q$ above as $Q_\epsilon = P + \gamma \epsilon$, where $\epsilon\in\R^{n\times n}$ is symmetric with zero diagonal and off-diagonal entries as $\pm1$. Recall that
\begin{align*}
TV(\IER(P),\IER(Q_\epsilon)) \geq H^2(\IER(P),\IER(Q_\epsilon))  = 1 - \sum_A \sqrt{P(A)}\sqrt{Q_\epsilon(A)} \,,
\end{align*}
where $H^2(\cdot,\cdot)$ is the squared Hellinger distance. The summation is over all possible adjacency matrices and we use $P(A)$ to denote the probability mass at $A$ under the distribution $\IER(P)$. 
We can write the above relation as
\begin{align*}
1- &TV\big(\IER(P),\IER(Q_\epsilon)\big)
\\&\leq \sum_A \prod_{i<j} \sqrt{p^{A_{ij}}(1-p)^{1-A_{ij}}}\sqrt{(p+\gamma\epsilon_{ij})^{A_{ij}}(1-p-\gamma\epsilon_{ij})^{1-A_{ij}}}
\\&=  \prod_{i<j} \sum_{A_{ij}\in\{0,1\}} \sqrt{p^{A_{ij}}(1-p)^{1-A_{ij}}}\sqrt{(p+\gamma\epsilon_{ij})^{A_{ij}}(1-p-\gamma\epsilon_{ij})^{1-A_{ij}}}
\\&=  \prod_{i<j} \sum_{A_{ij}\in\{0,1\}} \sqrt{p^{A_{ij}}(1-p)^{1-A_{ij}}}\sqrt{(p+\gamma\epsilon_{ij})^{A_{ij}}(1-p-\gamma\epsilon_{ij})^{1-A_{ij}}}
\\&= \prod_{i<j} \left[ \sqrt{p(p+\gamma\epsilon_{ij})} + \sqrt{(1-p)(1-p-\gamma\epsilon_{ij})}\right]
\\&= \prod_{i<j} \left[ p\sqrt{1+\frac{\gamma\epsilon_{ij}}{p}} + (1-p)\sqrt{1-\frac{\gamma\epsilon_{ij}}{1-p}}\right].
\end{align*}

On the third line, we swap the summation and the product using the relation $\sum_{i_1,\ldots,i_k} a_{i_1}\ldots a_{i_k} = \left( \sum_{i_1} a_{i_1}\right)\ldots\left( \sum_{i_k} a_{i_k}\right)$.
At this stage, we need the  upper bounds $\sqrt{1+x} \leq 1+\frac{x}{2} - \frac{x^2}{16}$ and $\sqrt{1-x} \leq 1-\frac{x}{2} - \frac{x^2}{16}$ for all $x\in[-1,1]$, which can be easily verified by squaring both sides.
Using these bounds, we can write
\begin{align*}
1- &TV\big(\IER(P),\IER(Q_\epsilon)\big)
\\&\leq \prod_{i<j} \left[ p\left(1+ \frac{\gamma\epsilon_{ij}}{2p} - \frac{\gamma^2}{16p^2}\right) + (1-p)\left(1- \frac{\gamma\epsilon_{ij}}{2(1-p)} - \frac{\gamma^2}{16(1-p)^2}\right)\right]
\\&=\left(1-\frac{\gamma^2}{16p(1-p)}\right)^{\binom{n}{2}}
\leq \exp\left(-\frac{n^2\gamma^2}{32\cc}\right)
\end{align*}
Thus, $(P,Q_\epsilon)\in\Omega_1$ for every $\epsilon$ if $\gamma > \frac{1}{n}\sqrt{32\cc\ln(\frac{1}{1-\sc})}$.
Let $\sc = 1 -\frac1n$ and $\gamma = \frac{8\sqrt{\cc \ln n}}{n}$.
The condition on $\cc$ stated in the proposition ensures that $\gamma \leq\frac\cc2$, and hence, $Q_\epsilon$ is non-negative and $\Vert Q_\epsilon\Vert_{max} \leq \cc$.

We now rely on the general technique for deriving lower bounds, and compute the quantity in~\eqref{eqn_likeratio} in the present case.
Let $\omega\in\mathcal{F}$ be the tuple $\omega = (G_1,\ldots,G_m,H_1,\ldots,H_m)$, where we assume that the first $m$ graphs are generated from the first model, and the rest from the second model.
Then 
\begin{align*}
\P_{\theta_0}(\omega) = \prod_{i<j} p^{(S_G)_{ij} + (S_H)_{ij}}(1-p)^{2m - (S_G)_{ij} - (S_H)_{ij}}\;,
\end{align*}
where $S_G = \sum_k A_{G_k}$ and $S_H = \sum_k A_{H_k}$.
On the other hand, by construction, every element in $\Theta_1$ is characterised by $\epsilon \in \{\pm1\}^{n(n-1)/2}$, which specifies if $Q_{ij} = (p+\gamma)$ or $(p-\gamma)$. 
Denoting the element by $\theta_\epsilon$, we have
\begin{align*}
\P_{\theta_\epsilon}(\omega) = \prod_{i<j}p^{(S_G)_{ij}} (1-p)^{m -  (S_G)_{ij}}
(p+\epsilon_{ij}\gamma)^{(S_H)_{ij}}(1-p-\epsilon_{ij}\gamma)^{m - (S_H)_{ij}}\;.
\end{align*}
Based on this, one can compute the quantity in~\eqref{eqn_likeratio} as
\begin{align*}
\sum\limits_{\omega\in\mathcal{F}} \frac{(\E_{\theta_1\sim \text{Unif}(\Theta_1)} [\P_{\theta_1}(\omega)])^2}{\P_{\theta_0}(\omega)}
=& \frac{1}{2^{n(n-1)}}\sum_\omega\sum_{\epsilon,\epsilon'} \prod_{i<j}
\frac{p^{(S_G)_{ij}}(1-p)^{m - (S_G)_{ij}}}{p^{(S_H)_{ij}}(1-p)^{m - (S_H)_{ij}}}
\\&\times(p+\epsilon_{ij}\gamma)^{(S_H)_{ij}}(1-p-\epsilon_{ij}\gamma)^{m - (S_H)_{ij}}
\\&\times(p+\epsilon'_{ij}\gamma)^{(S_H)_{ij}}(1-p-\epsilon'_{ij}\gamma)^{m - (S_H)_{ij}}
\end{align*}
Pushing the summation over $\omega$ inside the product and transforming it to a summation over every $i<j$, the above quantity corresponds to summing over possible values of $(S_G)_{ij}$ and $(S_H)_{ij}$, where each of them can take the value $k$ in $\binom{m}{k}$ ways. 
We obtain
\begin{align*}
&\sum\limits_{\omega\in\mathcal{F}} \frac{(\E_{\theta_1\sim \text{Unif}(\Theta_1)} [\P_{\theta_1}(\omega)])^2}{\P_{\theta_0}(\omega)}
\\&= \frac{1}{2^{n(n-1)}}\sum_{\epsilon,\epsilon'}\prod_{i<j} \sum_{k_G,k_H=0}^m 
\binom{m}{k_G}\binom{m}{k_H} p^{k_G} (1-p)^{m-k_G}
\\&~~\times \left(p+(\epsilon_{ij}+\epsilon'_{ij})\gamma + \frac{\epsilon_{ij}\epsilon'_{ij}\gamma^2}{p} \right)^{k_H}
\left(1-p-(\epsilon_{ij}+\epsilon'_{ij})\gamma + \frac{\epsilon_{ij}\epsilon'_{ij}\gamma^2}{1-p} \right)^{m-k_H}
\end{align*}
Note that this swapping summation and product follows from the same general relation used earlier.
One can now separate the terms corresponding to $k_G$ and $k_H$, and check that the former sums to 1 due to binomial expansion, while the latter sums to $\left(1+ \frac{\epsilon_{ij}\epsilon'_{ij}\gamma^2}{p(1-p)}\right)^m$. 
Thus, we have
\begin{align}
\sum\limits_{\omega\in\mathcal{F}} &\frac{(\E_{\theta_1\sim \text{Unif}(\Theta_1)} [\P_{\theta_1}(\omega)])^2}{\P_{\theta_0}(\omega)}
\nonumber
\\&= \frac{1}{2^{n(n-1)}} \sum_{\epsilon,\epsilon'} \prod_{i<j} \left(1+ \frac{\epsilon_{ij}\epsilon'_{ij}\gamma^2}{p(1-p)}\right)^m
\label{eqn_frotest_LB_prf0}
\\&=  \prod_{i<j}\left[\frac14 \sum_{\epsilon_{ij},\epsilon'_{ij}\in\{\pm1\}} \left(1+ \frac{\epsilon_{ij}\epsilon'_{ij}\gamma^2}{p(1-p)}\right)^m\right]
\nonumber
\\&= \left[\frac{1}{2}\left(1 + \frac{\gamma^2}{p(1-p)}\right)^m + \frac12\left(1 -\frac{\gamma^2}{p(1-p)}\right)^m\right]^{\binom{n}{2}}
\label{eqn_frotest_LB_prf1}
\end{align}
where in the second step we again swap the product and summation.
Finally, using the facts that $(1+x) \leq \exp(x)$ and $\cosh(x)\leq \exp(x^2/2)$ for all $x$, which can be verified from Taylor series expansion, we have
\begin{align}
\sum\limits_{\omega\in\mathcal{F}} \frac{(\E_{\theta_1\sim \text{Unif}(\Theta_1)} [\P_{\theta_1}(\omega)])^2}{\P_{\theta_0}(\omega)}
\leq&\left[\frac{1}{2}\exp\left(\frac{m\gamma^2}{p(1-p)}\right) + \frac12\exp\left(-\frac{m\gamma^2}{p(1-p)}\right)\right]^{\frac{n^2}{2}}
\nonumber
\\=& \left[\cosh\left(\frac{m\gamma^2}{p(1-p)}\right)\right]^{n^2/2}
\nonumber
\\\leq& \exp\left(\frac{4n^2m^2\gamma^4}{\cc^2}\right)\;.
\label{eqn_frotest_LB_prf2}
\end{align}
We observe that~\eqref{eqn_likeratio} holds for $\gamma \leq \sqrt{\frac{\ell_\eta \cc}{2mn}}$,
where $\ell_\eta = \sqrt{\ln(1+4(1-\eta)^2)}$.
Using the value for $\gamma = \frac{8\sqrt{\cc\ln n}}{n}$ assumed earlier, one can see that both~\eqref{eqn_likeratio} and $\Theta_1\subset \Omega_1$ hold under the stated condition on $m$, which leads to the claimed lower bound that $\sc^* > 1 - \frac1n$.

\subsection{Proof of Proposition~\ref{prop_skl}}
We begin by computing the symmetric KL-divergence between two IER models. Observe that
\begin{align*}
KL&(\IER(P)\Vert\IER(Q)) 
= \sum_A P(A)\ln\left(\frac{P(A)}{Q(A)}\right) 
\\&=\sum_A \sum_{i<j} P(A)  \ln \left( \frac{P_{ij}^{A_{ij}} (1-P_{ij})^{1-A_{ij}}}{Q_{ij}^{A_{ij}} (1-Q_{ij})^{1-A_{ij}}}\right)
\\&= \sum_{i<j} \sum_A \left( \prod_{i'<j'}  P_{i'j'}^{A_{i'j'}} (1-P_{i'j'})^{1-A_{i'j'}} \right) \ln \left( \frac{P_{ij}^{A_{ij}} (1-P_{ij})^{1-A_{ij}}}{Q_{ij}^{A_{ij}} (1-Q_{ij})^{1-A_{ij}}}\right).
\end{align*}
We now swap the summation and product, and note that there are two cases --- if $(i',j')\neq (i,j)$, then $\sum_{A_{i'j'}}P_{i'j'}^{A_{i'j'}} (1-P_{i'j'})^{1-A_{i'j'}} = 1$, whereas if $(i',j')=(i,j)$ the log term needs to be counted.
Hence,
\begin{align*}
KL(\IER(P)\Vert\IER(Q)) = \sum_{i<j} P_{ij}\ln\left(\frac{P_{ij}}{Q_{ij}}\right) + (1-P_{ij}) \ln\left(\frac{1-P_{ij}}{1-Q_{ij}}\right),
\end{align*}
which in turn implies
\begin{align*}
SKL(\IER(P),\IER(Q)) &= KL(\IER(P)\Vert\IER(Q)) + KL(\IER(Q)\Vert\IER(P))
\\&=\sum_{i<j} (P_{ij}-Q_{ij})\ln\left(\frac{P_{ij}(1-Q_{ij})}{Q_{ij}(1-P_{ij})}\right).
\end{align*}
Our proof of Proposition~\ref{prop_skl} relies the fact that $SKL(\IER(P),\IER(Q)) = \infty$ when there is $i<j$ such that $P_{ij}\neq0$ and $Q_{ij}=0$, or vice versa, which can be easily verified from the above expression. 
Based on this, we consider the following choice of $\theta_0$ and $\Theta_1$.
Let $p=\frac{\cc}{2}$, and define $\theta_0=(P,Q)$ such that every off-diagonal entry in $P$ and $Q$ equals $p$.
Let $\Theta_1$ be the collection of all $\theta=(P,Q)$, where $P$ is the same as before, and $Q$ equals $P$ on all entries except one off-diagonal entry which is zero.
As in the proof of Proposition~\ref{prop_tv}, we may denote any such $Q$ as $Q_\epsilon = P + p\epsilon$, where $\epsilon \in\R^{n\times n}$ is symmetric and zero everywhere except one entry above diagonal where $\epsilon_{ij} = -1$.
Note that there are $\binom{n}{2}$ such $\epsilon$'s and hence, this is the size of $\Theta_1$.
It is also obvious that $SKL(\IER(P),\IER(Q_\epsilon)) = \infty$.

We now compute the term in~\eqref{eqn_likeratio} in this case. Note that the computation up to~\eqref{eqn_frotest_LB_prf0} can be done for any matrix $\epsilon$, and in the present case, the same computation yields
\begin{align*}
\sum\limits_{\omega\in\mathcal{F}} \frac{(\E_{\theta_1\sim \text{Unif}(\Theta_1)} [\P_{\theta_1}(\omega)])^2}{\P_{\theta_0}(\omega)}
&= \frac{1}{\binom{n}{2}^2} \sum_{\epsilon,\epsilon'} \prod_{i<j} \left(1+ \frac{\epsilon_{ij}\epsilon'_{ij}p^2}{p(1-p)}\right)^m
\end{align*}
Observe that if $\epsilon\neq \epsilon'$ then $\epsilon_{ij}\epsilon'_{ij}=0$ for all $i<j$, and the product is one in these cases.
If $\epsilon=\epsilon'$, then the product equals $(1+\frac{p}{1-p})^m \leq (1+\cc)^m \leq \exp(m\cc)$. Hence,
\begin{align*}
\sum\limits_{\omega\in\mathcal{F}} \frac{(\E_{\theta_1\sim \text{Unif}(\Theta_1)} [\P_{\theta_1}(\omega)])^2}{\P_{\theta_0}(\omega)}
&\leq \frac{\binom{n}{2}-1}{\binom{n}{2}}. + \frac{\exp(m\cc)}{\binom{n}{2}} 
\leq 1+ \frac{4}{n^2}\exp(m\cc)\;.
\end{align*}
Thus,~\eqref{eqn_likeratio} is satisfied when $m\leq \frac{2}{\cc}\ln((1-\eta)n)$.

\subsection{Proof of Proposition~\ref{prop_zero}}
Consider the construction in the proof of Proposition~\ref{prop_tv}. If $\gamma>0$, then $\Vert P-Q_\epsilon\Vert_0 = n(n-1)$ for all $(P,Q_\epsilon)\in\Theta_1$.
But, as shown in the same proof,~\eqref{eqn_likeratio} holds for $\gamma \leq \sqrt{\frac{\ell_\eta \cc}{2mn}}$
Note that this upper bound on $\gamma$ is strictly positive for all $m<\infty$, $\cc>0$ and $\eta<1$.
Hence, the result follows.

\section{Proof of results in Section~4}

\subsection{Proof of Lemma~\ref{thm_frotest_LB}}

The necessary condition is proved using the instance constructed in the proof of Proposition~\ref{prop_tv}.
Observe that for $(P,Q_\epsilon)\in\Theta_1$, $\Vert P-Q_\epsilon\Vert_F = \gamma\sqrt{n(n-1)} > \frac{n\gamma}{2}$.

To prove condition (i) in the statement of the lemma, we use the bound in~\eqref{eqn_frotest_LB_prf2} to obtain that~\eqref{eqn_likeratio} holds for $\gamma \leq \sqrt{\frac{\ell_\eta \cc}{2mn}}$.
We now set $\gamma = \sqrt{\frac{\ell_\eta \cc}{2mn}} \wedge \frac\cc2$, where the second term ensures $\Vert Q_\epsilon\Vert_{max} \leq \cc$.
The claim follows from the fact that $\Theta_1\subset \Omega_1$ if $\sc < \frac{n\gamma}{2}$. 

To prove condition (ii) in the lemma, observe that the bound in~\eqref{eqn_frotest_LB_prf1} equals 1 for $m=1$, 
and hence,~\eqref{eqn_likeratio} is satisfied for all $\gamma \leq \frac\cc2$. Setting $\gamma = \frac\cc2$ leads to the claim.

\subsection{Proof of Lemma~\ref{thm_frotest_UB}}\label{Proof_thm_frotest_UB}

We assume that $m$ is even for convenience. If $m>2$ and odd, one may simply work with $m-1$ samples, and the result still holds with possible change in constants.
\paragraph{\textbf{Preparatory computations: Ingredients for Chebyshev's inequality}}
Let $\mu = \E[\mhat] = \frac{m^2}{8}\Vert P-Q\Vert_F^2$ and $\sigma^2 = \E[\shat^2] = \frac{m^2}{8}\Vert P+Q\Vert_F^2$. Since all terms in $\mhat$ and $\shat$ are independent, one can compute their variances as
\begin{align*}
\Var(\mhat) &= \sum_{i<j}  \Bigg[ \frac{m^2}{4}( P_{ij}(1-P_{ij}) + Q_{ij}(1-Q_{ij}))^2 
\\&\hskip20ex+  \frac{m^3}{4}(P_{ij}-Q_{ij})^2 ( P_{ij}(1-P_{ij}) + Q_{ij}(1-Q_{ij}) ) \Bigg]
\\&\leq \frac{m^2}{8}\sum_{ij} (P_{ij}+Q_{ij})^2 + \frac{m^3}{8} \sum_{ij}(P_{ij}-Q_{ij})^2(P_{ij}+Q_{ij})
\\&\leq \frac{m^2}{8} \Vert P+Q\Vert_F^2 + \frac{m^3}{8} \Vert P-Q\Vert_F^2 \Vert P+Q\Vert_F\;.
\end{align*}
In the last step, we use the Cauchy-Schwarz inequality for the second summation, followed by the fact $\Vert x\Vert_4 \leq \Vert x\Vert_2$.
Similarly, $\Var(\shat^2) \leq \frac{m^2}{8} \Vert P+Q\Vert_F^2 + \frac{m^3}{8} \Vert P+Q\Vert_F^3$, which is at most $\frac{m^3}{4}\Vert P+Q\Vert_F^3$ when $m\Vert P+Q\Vert_F \geq 1$.
\paragraph{\textbf{Controlling the type-I-error rate}}~\\
Consider $\theta=(P,Q)\in\Omega_0$. The test $\Psi_F$ makes an error only if both events in~\eqref{eqn_fro_test} occur, that is,
\begin{align}
\P_\theta(\Psi_F=1) \leq \P_\theta\left(\frac\mhat\shat > \frac{t_1}{\sqrt\eta}\right) \bigwedge \P_\theta \left(\shat > \frac{t_2}{\eta^{3/2}}\right)
\label{eqn_froUB_H0}
\end{align}
It is convenient to bound both probabilities separately based on the following case analysis:
\begin{description}
\item[\normalfont\underline{\textit{Case 1: $\Vert P+Q\Vert_{F}>\frac{C''}{\eta m}$, $C''\geq 1$:}}]~Here we bound the first term in~\eqref{eqn_froUB_H0} as
\begin{align*}
\P_\theta\left(\frac\mhat\shat > \frac{t_1}{\sqrt\eta}\right) 
&\leq \P_\theta\left(\mhat > \frac{\sigma t_1}{2\sqrt\eta}\right) + \P_\theta\left(\shat^2 < \frac{\sigma^2}{4}\right)
\\&\leq\frac{4\eta}{\sigma^2 t_1^2} \Var(\mhat) + \frac{16}{9\sigma^4} \Var(\shat^2)
\end{align*}
using the Chebyshev inequality. For $P=Q$, $\Var(\mhat) \leq \sigma^2$ and since $m\Vert P+Q\Vert_F \geq 1$, we have $\Var(\shat^2) \leq 2\sqrt{8}\sigma^3$. Thus, in this case, $\P_\theta(\Psi_F = 1) \leq \frac{4\eta}{t_1^2} + \frac{256\eta}{9C''}$, which is smaller than $\frac\eta2$ if $t_1$ and $C''$ are large enough.
\item[\normalfont\underline{\textit{Case 2: $\Vert P+Q\Vert_{F}\leq\frac{C''}{\eta m}$:}}]~Here the error probability in~\eqref{eqn_froUB_H0} can be bounded using the Markov inequality.
\begin{align*}
\P_\theta(\Psi_F=1) \leq \P_\theta \left(\shat^2 > \frac{t_2^2}{\eta^{3}}\right) \leq \frac{\sigma^2\eta^3}{t_2^2} \leq \frac{(C'')^2\eta}{8t_2^2}\;,
\end{align*}
which can be made smaller than $\frac\eta2$ by choosing $t_2$ large.
\end{description}
Thus the Type-I error rate for $\Psi_F$ is at most $\frac\eta2$.
\paragraph{\textbf{Controlling the type-II-error rate}}~\\
For $\theta=(P,Q)\in\Omega_1$, we can bound the error rate as
\begin{align}
\P_\theta(\Psi_F=0) &\leq \P_\theta\left(\frac\mhat\shat \leq \frac{t_1}{\sqrt\eta}\right) + \P_\theta \left(\shat \leq \frac{t_2}{\eta^{3/2}}\right)
\nonumber
\\&\leq \P_\theta\left(\mhat \leq \frac{3\sigma t_1}{2\sqrt\eta}\right) + \P_\theta\left(\shat^2 \geq \frac{9\sigma^2}{4}\right)+ \P_\theta \left(\shat^2 \leq \frac{t_2^2}{\eta^{3}}\right).
\label{eqn_froUB_H1}
\end{align} 
We aim to bound each term by $\frac\eta6$ so that the Type-II error rate is at most $\frac\eta2$.\\
We use the relations $\Vert P+Q\Vert_F\leq 2n\cc$ and $\sc \geq \sqrt{\frac{Cn\cc}{\eta m}}$ to write 
\begin{equation*}
\mu = \frac{m^2}{8}\Vert P-Q\Vert_F^2 > \frac{m^2\sc^2}{8} \geq \frac{Cmn\cc}{8\eta} \geq \frac{\sqrt{2}C\sigma}{8\eta} \;.
\end{equation*}
Hence, by choosing $C$ large enough, we can assume that $\frac{2\sigma t_1}{\sqrt\eta} \leq \frac{2\sigma t_1}{\eta}\leq \mu$, and can bound the first term as
\begin{align*}
\P_\theta\left(\mhat \leq \frac{3\sigma t_1}{2\sqrt\eta}\right) \leq \P_\theta\left(\mu - \mhat \geq \frac{\mu}{4}\right)
&\leq \frac{16}{\mu^2}\Var(\mhat)
\leq \frac{16}{\mu^2}\left(\sigma^2 + \sqrt{8}\mu\sigma\right).
\end{align*}
Noting that $\sigma \leq \frac{\mu \eta}{2t_1}$, we can get the above bound smaller than $\frac\eta6$ by choosing $t_1$ large.\\
To bound the second and third term in~\eqref{eqn_froUB_H1}, we note that $\Vert P+Q\Vert_F \geq \Vert P-Q\Vert_F > \sc \geq \frac{C'}{\eta^{3/2}m}$, and so, $\sigma \geq \frac{C'}{\sqrt8\eta^{3/2}}$. For large $C'$, we can write $\frac{t_2}{\eta^{3/2}}  \leq \frac{\sigma}{2}$. 
Hence, we may bound each of the last two terms in~\eqref{eqn_froUB_H1} by
\begin{align*}
\P_\theta\left( |\shat^2 - \sigma^2| \geq \frac{3\sigma^2}{4}\right) 
\leq \frac{16}{9\sigma^4} \Var(\shat^2) \leq \frac{256\eta^{3/2}}{9C'}
\end{align*} 
which is at most $\frac\eta6$ for large $C'$.
In the last step, we use $\Var(\shat^2) \leq 2\sqrt{8}\sigma^3$ since $m\Vert P+Q\Vert_F\geq1$.

\subsection{Proof of Proposition~\ref{thm_frotest_UB2}} 

The proof is along the lines of the proof of Lemma~\ref{thm_frotest_UB}, but uses a stronger concentration inequality that helps in improving the dependence on $\eta$ at the cost of an additional $\ln n$ factor.
\paragraph{\textbf{Preparation: New concentration inequality}}~\\
We now state a generic concentration bound that can be applied to both $\mhat$~\eqref{eqn_mhat} and $\shat^2$~\eqref{eqn_shat} afterwards. 
\begin{lem}[Concentration inequality for sum of ``product of sums'']
\label{lem_froUB_conc2}
Let $m$ be even and $d$ be any positive integer. Let $\{X_{kl} : 1\leq k\leq m, 1\leq l \leq d\}$ be a collection of independent random variables with $\E[X_{kl}] = a_l$, $|X_{kl} - a_l| \leq 2$ almost surely, and $\Var(X_{kl}) \leq v_l$.

Let $S_l = \sum\limits_{k\leq m/2} X_{kl}$ and $S'_l = \sum\limits_{k> m/2} X_{kl}$.
For any $\epsilon \in(0,\frac15)$,
\begin{equation}
\P\left( \sum_l \bigg( S_lS'_l - \frac{m^2a_l^2}{4} \bigg) > \tau_1 + \tau_2\right) \leq 6d\epsilon,
\label{eqn_sps_conc}
\end{equation}
where $a = \sqrt{\sum_l a_l^2}$, $v = \sqrt{\sum_l v_l^2}$,
\begin{align*}
z &= \frac43\ln\left(\frac2\epsilon\right) + \sqrt{mv\ln\left(\frac2\epsilon\right)}, \\ 
\tau_1 &= mv\sqrt{\frac12\ln\left(\frac{1}{\epsilon d}\right)} + \frac{2z^2}{3}\ln\left(\frac{1}{\epsilon d}\right),  
\qquad \text{and}\\
\tau_2 &= ma\sqrt{\frac{mv}{2}\ln\left(\frac{1}{\epsilon d}\right)} +\frac{2maz}{3}\ln\left(\frac{1}{\epsilon d}\right). 
\nonumber
\end{align*}
A similar concentration inequality also holds for the lower tail probability.
\end{lem}

Lemma~\ref{lem_froUB_conc2} improves upon Bernstein's inequality with respect to the allowable range of parameters $m$ and $d$. This improvement is possible because unlike Bernstein's inequality that only assumes $|S_\ell S'_\ell| \leq \frac14 m^2$ with probability 1, we use the fact that $S_\ell S'_\ell$ is a product of sums.
In the context of Proposition~\ref{thm_frotest_UB2}, we exploit that each product term in $\mhat$ and $\shat$ is a product of two sums, and hence, the individual terms also concentrate.
We prove the Lemma~\ref{lem_froUB_conc2} later in the section.
\paragraph{\textbf{Target probabilities to be bounded}}~\\
We now start with the proof of Proposition~\ref{thm_frotest_UB2}.
Similar to~\eqref{eqn_froUB_H0} and~\eqref{eqn_froUB_H1}, we have that
\begin{align}
\label{eqn_froUB2_H0}
\P_\theta(\Psi'_F = 1) 
\leq \left(\P_\theta\left(\mhat > \frac{\sigma t_1}{2}\ln\left(\frac{2}{\eta}\right)\ln\left(\frac{n}{\eta}\right)\right) + \P_\theta\left(\shat^2 < \frac{\sigma^2}{4}\right) \right)&
\\
\nonumber
\bigwedge \P_\theta\left( \shat^2 > t_2^2 \ln^4 \left(\frac2\eta\right) \ln^2 \left(\frac{n}{\eta}\right) \right)&
\end{align}
for $\theta\in\Omega_0$, and 
\begin{align}
\label{eqn_froUB2_H1}
\P_\theta(\Psi'_F = 0) 
\leq \P_\theta\left(\mhat \leq \frac{3\sigma t_1}{2}\ln\left(\frac{2}{\eta}\right)\ln\left(\frac{n}{\eta}\right)\right) + \P_\theta\left(\shat^2 \geq \frac{9\sigma^2}{4}\right) &
\\
\nonumber
+~ \P_\theta\left( \shat^2 \leq t_2^2 \ln^4 \left(\frac2\eta\right) \ln^2 \left(\frac{n}{\eta}\right) \right)&
\end{align}
for $\theta\in\Omega_1$.\\
The claim of the proposition follows if each term in~\eqref{eqn_froUB2_H0} and~\eqref{eqn_froUB2_H1} is at most $\frac\eta6$, which we show using Lemma~\ref{lem_froUB_conc2}.
\paragraph{\textbf{Controlling the type-II-error rate}}~\\
Firstly, let $\theta\in\Omega_1$. Observe that $\sigma = \frac{m}{\sqrt{8}}\Vert P+Q\Vert_F \leq {mn\cc}$.
Furthermore, due to the stated condition on $\sc$,
\begin{align}
\label{eqn_froUB2_pf1}
\mu &> \frac{m^2\sc^2}{8} \geq \frac{C^2}{8} mn\cc\ln^2\left(\frac2\eta\right) \ln\left(\frac{n}{\eta}\right) \geq \frac{C^2}{8} \sigma \ln^2\left(\frac2\eta\right) \ln\left(\frac{n}{\eta}\right), 
\\
\label{eqn_froUB2_pf2}
\text{and~~}\sigma &> \frac{m\sc}{\sqrt{8}} \geq \frac{C'}{\sqrt{8}}\ln^2\left(\frac2\eta\right) \ln\left(\frac{n}{\eta}\right).
\end{align}
Hence, for $C,C'$ large enough with respect to $t_1,t_2$, we may write~\eqref{eqn_froUB2_H1} as
\begin{align}
\label{eqn_froUB2_H1_1}
\P_\theta(\Psi'_F = 0) 
\leq \P_\theta&\left(\mhat \leq \frac{\mu}{2}\right) 
+ \P_\theta\left(\shat^2\geq \frac{3\sigma^2}{2}\right) 
+ \P_\theta\left(  \shat^2 \leq \frac{\sigma^2}{2}\right).
\end{align}
To deal with the first term, we use Lemma~\ref{lem_froUB_conc2} with $X_{kl} = (A_{G_k})_{ij} - (A_{H_k})_{ij}$, where each $l$ corresponds to an entry $(i,j)$ and $d=\binom{n}{2}$. 
We may set $v_l = P_{ij}+Q_{ij}$, and so, $v = \frac{1}{\sqrt2}\Vert P+Q\Vert_F = \frac{2\sigma}{m}$ and $a = \frac{2}{m}\sqrt{\mu}$.
We also let $\epsilon = \frac{\eta}{18n^2}$ so that the probability bound in Lemma~\ref{lem_froUB_conc2} is at most $\frac\eta6$, which also implies $\ln(\frac2\epsilon) < 7\ln(\frac{n}{\eta})$ and $\ln(\frac{1}{\epsilon d}) < 6\ln(\frac2\eta)$ for all $\eta<1$ and $n\geq2$. Substituting these and using~\eqref{eqn_froUB2_pf2}, we observe that $z$ in~\eqref{eqn_sps_conc} is at most $z\leq c_0\sqrt{\sigma \ln(\frac{n}{\eta})}$ for some constant $c_0$.
And so, $\tau_1+\tau_2$ in~\eqref{eqn_sps_conc} satisfies
\begin{align*}
\tau_1+\tau_2
&\lesssim mv\sqrt{\ln\left(\frac{2}{\eta}\right)} \bigvee \sigma \ln\left(\frac{2}{\eta}\right)  \ln\left(\frac{n}{\eta}\right) \bigvee ma\sqrt{mv\ln\left(\frac{2}{\eta}\right)} 
\\&\hskip35ex\bigvee ma \ln\left(\frac{2}{\eta}\right) \sqrt{\sigma \ln\left(\frac{n}{\eta}\right)}
\\&\lesssim \sigma \ln\left(\frac{2}{\eta}\right)  \ln\left(\frac{n}{\eta}\right) \bigvee \ln\left(\frac2\eta\right) \sqrt{ \mu\sigma \ln\left(\frac{n}{\eta}\right)}.
\end{align*}
 Based on the relation between $\mu$ and $\sigma$ in~\eqref{eqn_froUB2_pf1}, we conclude that the above quantity is smaller than $\frac\mu2$ for large enough $C$.
Hence, due to Lemma~\ref{lem_froUB_conc2}, the first term in~\eqref{eqn_froUB2_H1_1} is bounded by $4d\epsilon\leq \frac\eta6$.

The second and third terms in~\eqref{eqn_froUB2_H1_1} are bounded in similar way, but now we let $X_{kl} = (A_{G_k})_{ij} + (A_{H_k})_{ij}$. So, we have $a = \frac{1}{\sqrt2}\Vert P+Q\Vert_F = \frac{2\sigma}{m}$, and may again set $v_l = P_{ij}+Q_{ij}$, which gives $v = a = \frac{2\sigma}{m}$.
For $\epsilon = \frac{\eta}{18n^2}$, we again have $z\leq c_0\sqrt{\sigma \ln(\frac{n}{\eta})}$ using~\eqref{eqn_froUB2_pf2}, and
\begin{align*}
\tau_1+\tau_2 
\lesssim \left(\sigma \ln\left(\frac{2}{\eta}\right)  \ln\left(\frac{n}{\eta}\right) \bigvee \sigma^{3/2} \ln\left(\frac2\eta\right) \sqrt{  \ln\left(\frac{n}{\eta}\right)}\right)
\lesssim \frac{\sigma^2}{\sqrt{C'}},
\end{align*}
which is smaller than $\frac{\sigma^2}{2}$ for large enough $C'$.
Based on this and Lemma~\ref{lem_froUB_conc2}, the second and third terms in~\eqref{eqn_froUB2_H1_1} are at most $\frac\eta6$. 
\paragraph{\textbf{Controlling the type-I-error rate}}~\\
We now consider $\theta\in\Omega_0$, where $\mu=0$. As in Section \ref{Proof_thm_frotest_UB}, it is convenient to distinguish two cases:
\begin{description}
\item[\normalfont\underline{\textit{Case 1: $\sigma\geq \xi \ln^2(\frac2\eta)\ln(\frac{n}{\eta})$, $\xi> 1$:}}]~We first assume $\sigma\geq \xi \ln^2(\frac2\eta)\ln(\frac{n}{\eta})$ for some $\xi>1$ to be specified later.
Under this assumption, we show that the first two terms in~\eqref{eqn_froUB2_H0} are smaller than $\frac\eta6$.
For the first term, we invoke Lemma~\ref{lem_froUB_conc2} with $X_{kl} = (A_{G_k})_{ij} - (A_{H_k})_{ij}$ and observe that $a_l=0$, $v_l = P_{ij}+Q_{ij}$. Hence, $a=0$ and $v = \frac{2\sigma}{m}$.
Let $\epsilon = \frac{\eta}{12n^2}$.
Due to our assumption on $\sigma$, we have $\sigma \geq \ln(\frac{n}{\eta})$ and so $z\leq c_0 \sqrt{\sigma \ln(\frac{n}{\eta})}$ for some constant $c_0$. 
We also have, for some constant $c_1$,
\begin{align*}
\tau_1 \leq c_1 \left(mv\sqrt{\ln\left(\frac{2}{\eta}\right)} \bigvee \sigma \ln\left(\frac{2}{\eta}\right)  \ln\left(\frac{n}{\eta}\right)\right)
= c_1 \sigma \ln\left(\frac{2}{\eta}\right)  \ln\left(\frac{n}{\eta}\right)
\end{align*}
and $\tau_2 =0$. Hence, by setting $t_1\geq c_1$, we can bound the tail probability by $\frac\eta6$.
To bound the second term in~\eqref{eqn_froUB2_H0} under the assumption $\sigma\geq \xi \ln^2(\frac2\eta)\ln(\frac{n}{\eta})$, we take the approach similar to the concentration of $\shat$ under the alternative hypothesis.
Let $X_{kl} = (A_{G_k})_{ij} + (A_{H_k})_{ij}$ and so, $a = v = \frac{2\sigma}{m}$.
As in the case of the alternative hypothesis, we have
\begin{align*}
\tau_1+\tau_2 
&\lesssim  \left(\sigma \ln\left(\frac{2}{\eta}\right)  \ln\left(\frac{n}{\eta}\right) \bigvee \sigma^{3/2} \ln\left(\frac2\eta\right) \sqrt{  \ln\left(\frac{n}{\eta}\right)}\right)
\lesssim \frac{\sigma^2}{\sqrt\xi} .
\end{align*}
For $\xi$ large enough, we have $\tau_1+\tau_2 \leq \frac{3\sigma^2}{4}$ and so, the second term  in~\eqref{eqn_froUB2_H0} is at most $\frac\eta6$. Thus, $\P_\theta(\Psi'_F=1)\leq \frac\eta3$ when $\sigma\geq \xi \ln^2(\frac2\eta)\ln(\frac{n}{\eta})$.

\item[\normalfont\underline{\textit{Case 2: $\sigma< \xi \ln^2(\frac2\eta)\ln(\frac{n}{\eta})$:}}]~
We show that the third term in~\eqref{eqn_froUB2_H0} is smaller than $\frac\eta6$.
Observe that if $t_2 > \xi$,
\begin{align*}
\P_\theta&\left( \shat^2 > t_2^2 \ln^4 \left(\frac2\eta\right) \ln^2 \left(\frac{n}{\eta}\right) \right)
\\&\leq  \P_\theta\left( \shat^2 - \sigma^2 > (t_2^2-\xi^2) \ln^4 \left(\frac2\eta\right) \ln^2 \left(\frac{n}{\eta}\right) \right),
\end{align*}
which we bound using Lemma~\ref{lem_froUB_conc2}.
Define $X_{kl} = (A_{G_k})_{ij} + (A_{H_k})_{ij}$ and $\epsilon = \frac{\eta}{18n^2}$ as above, and observe that under the condition on $\sigma$, we have $z \leq \sqrt{\xi}\ln(\frac2\eta)\ln(\frac{n}{\eta})$ noting that $\xi>1$ and $v=a= \frac{2\sigma}{m}$. Hence,
\begin{align*}
\tau_1+\tau_2
&\lesssim \sigma\sqrt{\ln\left(\frac{2}{\eta}\right)} \bigvee \xi \ln^3\left(\frac{2}{\eta}\right)  \ln\left(\frac{n}{\eta}\right) 
\\&\qquad\qquad\qquad\qquad\bigvee \sigma^{3/2}\sqrt{\ln\left(\frac{2}{\eta}\right)} \bigvee \sigma\sqrt{\xi} \ln^2\left(\frac{2}{\eta}\right)  \ln\left(\frac{n}{\eta}\right)
\\&\lesssim \xi^{3/2} \ln^4\left(\frac{2}{\eta}\right)  \ln^2\left(\frac{n}{\eta}\right),
\end{align*}
which is smaller than the above threshold for $t_2^2 \geq 2\xi^2$ and so, the above probability is smaller than $\frac\eta6$.
\end{description}
Hence, for $t_1$ and $t_2$ large enough, the rejection rate under null is also smaller than $\frac\eta2$, which completes the proof of the stated result. We conclude with the proof of Lemma~\ref{lem_froUB_conc2}.

\begin{proof}[Proof of Lemma~\ref{lem_froUB_conc2}]
We only prove the bound on upper tail probability as the corresponding result for the lower tail probability can be proved in a similar way.
We define the events $\xi_l = \left\{\left|S_l - \frac{ma_l}{2}\right| \leq z\right\}$ and  $\xi'_l = \left\{\left|S'_l - \frac{ma_l}{2}\right| \leq z\right\}$ for $l=1,\ldots,d$, and let $\xi = \bigcap\limits_l (\xi_l \cap \xi'_l)$. 
We can now write
\begin{align}
\P&\left( \sum_l S_lS'_l - \frac{m^2a_l^2}{4} > \tau_1 + \tau_2 \right)
\nonumber
\\&\leq \P\left(\xi^c\right) + \P\left(\left. \sum_l \left(S_l - \frac{ma_l}{2}\right)\left(S'_l - \frac{ma_l}{2}\right) > \tau_1\right| \xi  \right) 
\nonumber
\\&\qquad\qquad\quad + \P\left(\left. \sum_l \frac{ma_l}{2}\left(S_l + S'_l - ma_l\right) > \tau_2\right| \xi  \right).
\label{eqn_sum_pos_pf2}
\end{align}
For the first term, we note that due to the Bernstein inequality,
\begin{align*}
\P\left(\xi_l^c\right) = \P\left(\left|S_l - \frac{ma_l}{2}\right| \leq z\right) 
\leq 2\exp\left(\frac{-z^2}{mv_l + \frac43 z}\right)
\end{align*}
noting that $|X_{kl}-a_l| \leq 2$.
Substituting $z$ and noting that $v_l\leq v$, the above bound is at most $\epsilon$. Hence, by union bound $\P\left(\xi^c\right) \leq 4d\epsilon$.
To deal with the other two terms in~\eqref{eqn_sum_pos_pf2}, we need the following claim.
\begin{clm}
The following relations hold for all $l=1,\ldots,d$:
\\(i) $\{S_l,S'_l : l=1,\ldots,d\}$ are mutually independent after conditioning on $\xi$, and
(ii) $\Var(S_l |\xi) = \Var(S_l | \xi_l) \leq \Var(S_l)$.
\end{clm}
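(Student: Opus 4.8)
The plan is to reduce everything to the product structure of the conditioning event $\xi$. First I would note that the $2d$ blocks of variables $\{X_{kl}:k\le m/2\}$ and $\{X_{kl}:k>m/2\}$, indexed by $l=1,\dots,d$, are pairwise disjoint, so by the assumed mutual independence of the $X_{kl}$ these blocks generate $2d$ mutually independent $\sigma$-algebras; $S_l$ is measurable with respect to the $l$-th ``first half'' block and $S'_l$ with respect to the $l$-th ``second half'' block. Since $\xi_l\in\sigma(S_l)$ and $\xi'_l\in\sigma(S'_l)$, the event $\xi=\bigcap_l(\xi_l\cap\xi'_l)$ is an intersection of events each living in exactly one of these independent $\sigma$-algebras. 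Then for bounded measurable functions $f_1,\dots,f_d,f'_1,\dots,f'_d$ I would compute
\[
\E\Big[\prod_l f_l(S_l)\prod_l f'_l(S'_l)\,\Big|\,\xi\Big]
=\frac{\prod_l\E\big[f_l(S_l)\1_{\xi_l}\big]\,\prod_l\E\big[f'_l(S'_l)\1_{\xi'_l}\big]}{\prod_l\P(\xi_l)\,\prod_l\P(\xi'_l)}
=\prod_l\E\big[f_l(S_l)\mid\xi_l\big]\,\prod_l\E\big[f'_l(S'_l)\mid\xi'_l\big],
\]
using independence in both numerator and denominator. Reading this identity with all but one factor equal to $1$ shows that, under $\P(\,\cdot\mid\xi)$, the law of $S_l$ equals its law under $\P(\,\cdot\mid\xi_l)$; reading it in general shows that $\{S_l,S'_l:l=1,\dots,d\}$ remain mutually independent under $\P(\,\cdot\mid\xi)$. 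This gives part~(i) together with the equality $\Var(S_l\mid\xi)=\Var(S_l\mid\xi_l)$ in part~(ii).

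It remains to prove $\Var(S_l\mid\xi_l)\le\Var(S_l)$. The key observation is that $\E[S_l]=\tfrac{m}{2}a_l$, so $\xi_l=\{(S_l-\E S_l)^2\le z^2\}$, i.e.\ $\xi_l$ is the event that the fixed nonnegative random variable $Z:=(S_l-\E S_l)^2$ is small. Now $Z\mapsto Z$ is nondecreasing and $Z\mapsto\1\{Z\le z^2\}$ is nonincreasing, so the Chebyshev/Harris correlation inequality (which follows from $\E[(g(W)-g(W'))(h(W)-h(W'))]\le0$ for i.i.d.\ $W,W'$, $g$ nondecreasing, $h$ nonincreasing) yields
\[
\E\big[(S_l-\E S_l)^2\,\1_{\xi_l}\big]\le\P(\xi_l)\,\E\big[(S_l-\E S_l)^2\big]=\P(\xi_l)\,\Var(S_l).
\]
Combining this with the elementary inequality $\Var(S_l\mid\xi_l)=\E[(S_l-\E[S_l\mid\xi_l])^2\mid\xi_l]\le\E[(S_l-\E S_l)^2\mid\xi_l]=\E[(S_l-\E S_l)^2\1_{\xi_l}]/\P(\xi_l)$ completes part~(ii).

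The only genuinely non-routine point is the variance-reduction step: it is tempting to regard ``conditioning on a small interval around the mean shrinks the variance'' as obvious, but the clean justification is to recast that interval as a sublevel set of the squared deviation $Z$ and invoke the one-variable correlation inequality. Everything else — the factorisation of the conditional expectation and the identification $\Var(S_l\mid\xi)=\Var(S_l\mid\xi_l)$ — is bookkeeping about disjoint blocks of independent variables.
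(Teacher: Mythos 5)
Your argument is correct and follows the same overall structure as the paper's proof: part (i) and the equality $\Var(S_l\mid\xi)=\Var(S_l\mid\xi_l)$ come from the fact that $\xi$ factors into events $\xi_l\in\sigma(S_l)$, $\xi'_l\in\sigma(S'_l)$ attached to mutually independent blocks, and the inequality is reduced to bounding the conditional second moment $\E\big[(S_l-\E S_l)^2\mid\xi_l\big]$ by $\Var(S_l)$. The only place you diverge is the mechanism for that last bound: the paper writes $X=(S_l-\tfrac{m a_l}{2})^2$ and simply splits $\Var(S_l)=\E[X\1\{X\le z^2\}]+\E[X\1\{X>z^2\}]\ge \E[X\mid\xi_l]\P(\xi_l)+z^2\P(\xi_l^C)\ge\E[X\mid\xi_l]$, using $\E[X\mid\xi_l]\le z^2$, whereas you invoke the one-variable Chebyshev/Harris correlation inequality for the nondecreasing function $Z\mapsto Z$ and the nonincreasing function $Z\mapsto\1\{Z\le z^2\}$; both routes are elementary and give the same conclusion, with your version making explicit the additional (implicit in the paper) step $\Var(S_l\mid\xi_l)\le\E\big[(S_l-\E S_l)^2\mid\xi_l\big]$.
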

\begin{proof} 
Note that without conditioning $\{S_l,S'_l : l=1,\ldots,d\}$ are mutually independent, and $\xi_l$ is defined only on $S_l$. Hence, $\{\xi_l,\xi'_l : l=1,\ldots,d\}$ are independent, and moreover, $\xi_l$ is independent of the mentioned random variables apart from $S_l$. From this observation, (i) follows.

The equality in (ii) follows directly from the above arguments. To prove the inequality, define the non-negative random variable $Y = (S_l - \frac12 ma_l)^2$, and note that $\xi_l = \{Y\leq z^2\}$. Hence,
\begin{align*}
\Var(S_l) = \E[Y] &= \E[Y \1\{Y\leq z^2\}] + \E[Y \1\{Y> z^2\}]
\\&\geq \E[Y| \xi_l] \P(\xi_l) + z^2 \P(\xi_l^c)
\\&= \E[Y|\xi_l] + (z^2 - \E[Y|\xi_l]) \P(\xi_l^c) \geq \E[Y|\xi_l]
\end{align*}
since $\E[Y|\xi_l] = \frac{\E[Y \1\{X\leq z^2\}]}{\P(\xi_l)} \leq \frac{z^2\P(\xi_l)}{\P(\xi_l)} = z^2$. Hence, the claim.
\end{proof}

We now apply Bernstein inequality for the second term in~\eqref{eqn_sum_pos_pf2} to obtain
\begin{align*}
\P&\left(\left. \sum_l \left(S_l - \frac{ma_l}{2}\right)\left(S'_l - \frac{ma_l}{2}\right) > \tau_1\right| \xi  \right) 
\\&\leq \exp\left(\frac{-\tau_1^2}{2\sum_l\Var\left( (S_l - \frac{ma_l}{2})(S'_l - \frac{ma_l}{2}) |\xi\right) + \frac23 z^2\tau_1}\right)
\\&\leq d\epsilon \,,
\end{align*}
where we use the claim to write $\Var\left( (S_l - \frac{ma_l}{2})(S'_l - \frac{ma_l}{2}) |\xi\right)$ is at most $\Var(S_l)\Var(S'_l) = \frac{m^2v_l^2}{4}$, and then substitute the value of $\tau_1$.
The third term in~\eqref{eqn_sum_pos_pf2} can be dealt with similarly as
\begin{align*}
\P&\left(\left. \sum_l \frac{ma_l}{2}\left(S_l + S'_l - ma_l\right) > \tau_2\right| \xi  \right)
\\&\leq \exp\left(\frac{-\tau_2^2}{2\sum_l\frac{m^2a_l^2}{4}\left(\Var(S_l |\xi) + \Var(S'_l |\xi) \right) + \frac13 mz (\max_l a_l) \tau_2}\right)
\\&\leq \exp\left(\frac{-\tau_2^2}{\frac12m^3\sum_l a_l^2 v_l + \frac13 maz \tau_2}\right)
\leq d\epsilon \,,
\end{align*}
In the last step, we take $\sum_l  a_l^2 v_l \leq a^2 (\max_l v_l) \leq a^2v$, and substitute $\tau_2$. 
\end{proof}

\section{Proof of results in Section~5}

\subsection{Proof of Lemma~\ref{thm_optest_LB}} 
We follow the generic technique for deriving lower bounds described earlier, where we need to show that~\eqref{eqn_likeratio} holds for some choice of $\theta_0 \in \Omega_0$ and $\Theta_1\subset \Omega_1$.
Let $p=\frac\cc2$, and $\gamma\in(0,p]$.
We choose $\theta_0=(P,Q)$ such that every off-diagonal entry in $P$ and $Q$ equals $p$.
Let $\Theta_1$ be the collection of all $\theta=(P,Q)$, where $P$ is same as before, but $Q$ is chosen in the following way. For every $v\in\{-1,+1\}^n$, we define $Q=Q_v$ such that
$Q_{ij} = (p + \gamma v_i v_j)$ for every $i\neq j$.
One can see that there are exactly $2^n$ elements in $\Theta_1$, each corresponding to a $v\in\{-1,+1\}^n$.
(To be precise, $\Theta_1$ contains $2^{n-1}$ elements since $v$ and $-v$ result in the same $Q$. But, for convenience, we compute the expectation by counting every model twice and divide by $2^n$).  
Note that $\Theta_1\subset \Omega_1$ if $\Vert P-Q_v\Vert_{op} = \gamma \Vert vv^T-I\Vert_{op} = \gamma(n-1) > \sc$.

We now compute the quantity in~\eqref{eqn_likeratio}. As before, let $\omega\in\mathcal{F}$ correspond to the tuple $\omega = (G_1,\ldots,G_m,H_1,\ldots,H_m)$, where we assume that the first $m$ graphs are generated from the first model, and the rest from the second model.
Then 
\begin{align*}
\P_{\theta_0}(\omega) = \prod_{i<j} p^{(S_G)_{ij} + (S_H)_{ij}}(1-p)^{2m - (S_G)_{ij} - (S_H)_{ij}}\;,
\end{align*}
where $S_G = \sum_k A_{G_k}$ and $S_H = \sum_k A_{H_k}$.
On the other hand, every element in $\Theta_1$ is characterised by $v \in \{\pm1\}^{n}$. 
Denoting the element by $\theta_v$, we have
\begin{align*}
\P_{\theta_v}(\omega) = \prod_{i<j}p^{(S_G)_{ij}} (1-p)^{m -  (S_G)_{ij}}
(p+v_iv_j\gamma)^{(S_H)_{ij}}(1-p-v_iv_j\gamma)^{m - (S_H)_{ij}}\;.
\end{align*}
The quantity in~\eqref{eqn_likeratio} can be computed as 
\begin{align*}
\sum\limits_{\omega\in\mathcal{F}}& \frac{(\E_{\theta_1\sim \text{Unif}(\Theta_1)} [\P_{\theta_1}(\omega)])^2}{\P_{\theta_0}(\omega)}
\\=& \frac{1}{2^{2n}}\sum_\omega\sum_{v,v'} \prod_{i<j}
\frac{p^{(S_G)_{ij}}(1-p)^{m - (S_G)_{ij}}}{p^{(S_H)_{ij}}(1-p)^{m - (S_H)_{ij}}}(p+v_iv_j\gamma)^{(S_H)_{ij}}
\\&\times(1-p-v_iv_j\gamma)^{m - (S_H)_{ij}}
(p+v'_iv'_j\gamma)^{(S_H)_{ij}}(1-p-v'_iv'_j\gamma)^{m - (S_H)_{ij}}
\\=& \frac{1}{2^{2n}} \sum_{v,v'} \prod_{i<j} \left(1 +  \frac{v_iv_jv'_iv'_j\gamma^2}{p(1-p)} \right)^m
\\\leq& \frac{1}{2^{2n}} \sum_{v,v'} \exp \left(\frac{2m\gamma^2}{p} \sum_{i<j} v_iv'_iv_jv'_j\right)
\end{align*}
where the second equality follows steps similar to derivation leading to~\eqref{eqn_frotest_LB_prf1}, and in the last step, we note $p\leq \frac12$.
Note that the above term can be viewed as an expectation of the summation where $v,v'$ are independent and uniformly distributed over $\{\pm1\}^n$.
Note that the function depends only on $z=v\circ v'$, where $\circ$ denote coordinate-wise product, and if $v,v'$ are i.i.d. uniform over $\{\pm1\}^n$, then $z$ is also uniform on $\{\pm1\}^n$. Thus, the above bound may be expressed as $\E_z\left[\exp \left(\frac{2m\gamma^2}{p} \sum_{i<j} z_iz_j\right)\right] = \E_z\left[\exp \left(\frac{m\gamma^2}{p} (S_n^2-n)\right)\right]$ defining $S_k = \sum_{i\leq k} z_i$.

We now claim the following.
\begin{clm}
\label{clm_pf_opLB_1}
Let $(c_l)_{l=0,\ldots,n-1}$ be such that $c_0\leq \frac{1}{32n}$ and $c_{l+1} = c_l + 8c_l^2$.
Then $c_l \leq c_0\left(1+ \frac{l}{n-1}\right) \leq 2c_0$. 
\end{clm}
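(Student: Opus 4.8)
The plan is a one-line induction on $l$, exactly parallel to Claim~\ref{clm_pf_froUB_1}. (I read the index ``$k$'' in the displayed bound as ``$l$'', so that the assertion is $c_l\le c_0\bigl(1+\tfrac{l}{n-1}\bigr)\le 2c_0$ for all $l=0,\dots,n-1$.) The base case $l=0$ is immediate. Suppose the first bound holds for $c_l$ with $l\le n-2$; then $c_l\le 2c_0$, and from the recursion $c_{l+1}=c_l+8c_l^2$,
\[
c_{l+1}\le c_0\Bigl(1+\frac{l}{n-1}\Bigr)+8(2c_0)^2
= c_0\Bigl(1+\frac{l}{n-1}\Bigr)+32c_0^2\le c_0\Bigl(1+\frac{l}{n-1}+\frac1n\Bigr),
\]
where the last step uses $32c_0\le\frac1n$, i.e.\ $c_0\le\frac1{32n}$. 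Since $\frac1n\le\frac1{n-1}$, this gives $c_{l+1}\le c_0\bigl(1+\frac{l+1}{n-1}\bigr)$, closing the induction; taking $l\le n-1$ yields $c_l\le 2c_0$.

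The role of this claim is the same as that of Claims~\ref{clm_pf_froUB_1}--\ref{clm_pf_froUB_2} in Appendix~A. After the reduction carried out in the excerpt, the quantity in~\eqref{eqn_likeratio} is bounded by $\E_z\bigl[\exp\bigl(\frac{m\gamma^2}{p}(S_n^2-n)\bigr)\bigr]$, with $S_k=\sum_{i\le k}z_i$ and $z$ uniform on $\{\pm1\}^n$. One bounds this by peeling off $z_n,z_{n-1},\dots$ one coordinate at a time: a companion claim (in the spirit of Claim~\ref{clm_pf_froUB_2}) shows that each peeling step replaces a coefficient $c$ multiplying $S^2$ by $c+8c^2$ at the cost of a controlled multiplicative factor. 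Starting from $c_0=\frac{m\gamma^2}{p}$---which satisfies $c_0\le\frac1{32n}$ exactly when $\gamma$ is small enough---Claim~\ref{clm_pf_opLB_1} certifies that the coefficient never exceeds $2c_0$, so the accumulated product of factors stays bounded; collecting these factors yields a bound that is at most $1+4(1-\eta)^2$ provided $\gamma$ is at most a suitable constant (depending on $\eta$) times $\sqrt{p/(mn)}$. Combined with the admissibility constraints $\sc\sqrt{6p/n}<\gamma\le p$ and $p\ge\frac{2\cc}{n}$ used to place $\theta_0\in\Omega_0$ and $\Theta_1\subset\Omega_1$, this furnishes the asserted lower bound on $\sc$.

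There is essentially no obstacle in proving Claim~\ref{clm_pf_opLB_1} itself: the only point needing care is the constant $32$ in the hypothesis $c_0\le\frac1{32n}$, which is chosen precisely so that each of the $n-1$ increments $32c_0^2$ is at most $\frac{c_0}{n}$ and the telescoped total of increments stays below $c_0$; one must also keep $l\le n-1$ throughout so that $1+\frac{l}{n-1}\le 2$, which is why the claim is stated only for $l=0,\dots,n-1$. The more delicate work lies in the surrounding argument---verifying the companion peeling estimate and bounding the product of multiplicative factors---but that follows the template already established for the Frobenius case.
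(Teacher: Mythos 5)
Your proof is correct and is exactly the induction the paper intends (the paper simply says "similar to Claim~\ref{clm_pf_froUB_1}" and that claim is proved by the identical one-line inductive step, with $32c_0^2\le c_0/n\le c_0/(n-1)$ closing the induction). You are also right that the "$k$" in the displayed bound is a typo for "$l$".
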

\begin{proof}
The claim can be proved by induction. If the first bound holds for $c_l$, then
\begin{align*}
c_{l+1} \leq \left( 1 + \frac{l}{n-1}\right)c_0 + 32c_0^2 \leq c_0\left( 1 + \frac{l}{n-1} + \frac{1}{n}\right)
\end{align*}
if $c_0\leq \frac{1}{32n}$, which leads to the desired bound on $c_{l+1}$.
\end{proof}

\begin{clm}
\label{clm_pf_opLB_2}
For any $c\leq 2c_0 \leq \frac{1}{16n}$,
\begin{equation*}
\E_{z_1,\ldots,z_{l+1}}\left[\exp\left(cS_{l+1}^2\right)\right] \leq \exp(c+8c^2)\E_{z_1,\ldots,z_l}\left[\exp\left((c+8c^2)S_l^2\right)\right].
\end{equation*}
\end{clm}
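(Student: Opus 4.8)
The plan is to run the same one-step recursion used in Claim~\ref{clm_pf_froUB_2}, now exploiting that each $z_i\in\{\pm1\}$ satisfies $z_i^2=1$. Writing $S_{l+1}=S_l+z_{l+1}$ we get $S_{l+1}^2=S_l^2+2S_lz_{l+1}+1$, so that
$$\exp\!\big(cS_{l+1}^2\big)=e^{c}\,\exp\!\big(cS_l^2\big)\,\exp\!\big(2cS_lz_{l+1}\big),$$
and it suffices to bound the conditional expectation over $z_{l+1}$ with $z_1,\dots,z_l$ (hence $S_l$) held fixed.

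First I would verify the range condition needed for a second-order Taylor bound: since $|S_l|\le l\le n$ and $c\le 2c_0\le\tfrac{1}{16n}$, one has $|2cS_l|\le 2cn\le\tfrac18\le 1$. Hence $e^x\le 1+x+2x^2$ applies with $x=2cS_lz_{l+1}$, and using $\E[z_{l+1}]=0$ together with $z_{l+1}^2=1$,
$$\E_{z_{l+1}}\!\big[\exp(2cS_lz_{l+1})\big]\ \le\ 1+2(2cS_l)^2\ =\ 1+8c^2S_l^2\ \le\ \exp\!\big(8c^2S_l^2\big).$$

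Combining the last two displays gives $\E_{z_{l+1}}\big[\exp(cS_{l+1}^2)\,\big|\,z_1,\dots,z_l\big]\le e^{c}\exp\!\big((c+8c^2)S_l^2\big)\le\exp(c+8c^2)\exp\!\big((c+8c^2)S_l^2\big)$, and taking the expectation over $z_1,\dots,z_l$ and invoking the tower property yields the claim. The argument is essentially routine; the only point that needs care is checking $|2cS_l|\le 1$ so that the quadratic Taylor bound is legitimate — this is precisely where the hypothesis $c\le 2c_0\le\tfrac{1}{16n}$ is used — and noting that the increment $8c^2$ produced here is exactly the one appearing in the recursion $c_{l+1}=c_l+8c_l^2$ of Claim~\ref{clm_pf_opLB_1}, so that iterating Claims~\ref{clm_pf_opLB_1} and~\ref{clm_pf_opLB_2} in the main proof will telescope cleanly into a bound on $\E_z[\exp(c_0S_n^2)]$.
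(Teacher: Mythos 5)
Your proof is correct and follows essentially the same route as the paper: a one-step conditional expectation over $z_{l+1}$ using $z_{l+1}^2=1$, $\E[z_{l+1}]=0$, and the quadratic bound $e^x\le 1+x+2x^2$ valid under the range condition guaranteed by $c\le\frac{1}{16n}$. The only (harmless) cosmetic difference is that you factor out $e^{c}$ before applying the Taylor bound, whereas the paper keeps the $+c$ inside the exponent and bounds $\exp(2cS_l z_{l+1}+c)$ directly; both yield the stated factor $\exp(c+8c^2)$.
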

\begin{proof}
Observe that $\exp(x) \leq (1 + x + 2x^2)$ for all $x\in[-1,1]$, and  $|2cS_lz_{l+1} + c| \leq 1$  for any $c\leq\frac{1}{16n}$.
The bound follows since 
\begin{align*}
\E_{z_1,\ldots,z_{l+1}}\left[\exp\left(cS_{l+1}^2\right)\right]
&= \E_{z_1,\ldots,z_{l}}\left[\exp\left(cS_{l}^2\right)\E_{z_{l+1}}\left[\exp\left(2cS_lz_{l+1} + c\right)\right]\right]
\\&\leq \E_{z_1,\ldots,z_{l}}\left[\exp\left(cS_{l}^2\right)\left(1 + c + 2c^2 + 8c^2S_l^2\right)\right]
\\&\leq (1+c+2c^2) \E_{z_1,\ldots,z_{l}}\left[\exp\left((c+8c^2)S_{l}^2\right)\right].
\\&\leq\exp(c+8c^2) \E_{z_1,\ldots,z_{l}}\left[\exp\left((c+8c^2)S_{l}^2\right)\right].
\end{align*}
The first inequality uses the bound on $\exp(x)$ and then takes expectation noting that $z_{l+1}^2 = 1$.
\end{proof}

Setting $c_0 = \frac{m\gamma^2}{p}$ and using these two claims repeatedly, we bound
\begin{align*}
\sum\limits_{\omega\in\mathcal{F}} \frac{(\E_{\theta_1\sim \text{Unif}(\Theta_1)} [\P_{\theta_1}(\omega)])^2}{\P_{\theta_0}(\omega)}
&\leq \exp(-c_0n) \E_{z_1,\ldots,z_n} \left[\exp\left(c_0 S_n^2\right)\right]
\\&\leq \exp(-c_0n+c_1) \E_{z_1,\ldots,z_{n-1}} \left[\exp\left(c_1 S_{n-1}^2\right)\right]
\\&\leq \exp\left(-c_0 n + \sum_{i=1}^{n-1} c_i\right) \E_{z_1}\left[\exp\left(c_{n-1} z_1^2\right)\right]
\\&\leq \exp\left(c_0 n\right) 
\end{align*}
since $c_i\leq 2c_0$ for all $i$.
Note that the upper bound equals $\exp(\frac{mn\gamma^2}{p})$, and holds if $c_0\leq \frac{1}{32n}$, that is $\gamma \leq \sqrt{\frac{p}{32mn}}$.
On the other hand, the inequality in~\eqref{eqn_likeratio} holds if $\gamma \leq \ell_\eta\sqrt{\frac{p}{mn}}$, where $\ell_\eta = \sqrt{\ln(1+4(1-\eta)^2)}$. Hence, the minimax risk is at least $\eta$ if $\gamma \leq (\ell_\eta\wedge \frac{1}{\sqrt{32}})\sqrt{\frac{p}{mn}} = 2\ell'_\eta\sqrt{\frac{\cc}{mn}}$ for $\ell'_\eta$ defined in Theorem~\ref{thm_optest}.
We now set $\gamma = \frac\cc2 \wedge 2\ell'_\eta\sqrt{\frac{\cc}{mn}}$, and obtain the stated claim by substituting this value of $\gamma$ in the condition $\gamma(n-1) \geq \frac{\gamma n}{2} > \sc$.

\subsection{Proof of Lemma~\ref{thm_optest_UB}} 

The general approach for the proof is along the lines of Lemma~\ref{thm_frotest_UB}, but we now need concentration inequalities for operator norm and row sum norm.

\paragraph{\textbf{Preparatory computations and inequalities}}~\\
The concentration result for operator norm that we use is a direct consequence of the matrix Bernstein inequality~\citep{Tropp_2012_jour_FOCM,Oliveira_2009_arxiv_0600}. 
The proof, given at the end of this subsection, is similar to the concentration result for random adjacency matrices in Theorem 3.1 of~\citet{Oliveira_2009_arxiv_0600}.
\begin{lem}[Concentration of operator norm]
\label{lem_matrixBernstein}
Let $\theta=(P,Q)$. For any $\tau$ such that $0<\tau\leq m\Vert P+Q\Vert_{row}$, we have
\begin{displaymath}
\P_\theta\left(\Vert S^- - m(P-Q)\Vert_{op} > \tau\right) \leq 2n\exp\left(\frac{-\tau^2}{3m\Vert P+Q\Vert_{row}}\right).
\end{displaymath}
\end{lem}

We also use the following concentration result for the row sum norm, which is also proved later.
\begin{lem}[Concentration of row sum norm]
\label{lem_row_sum_conc}
For any $\theta=(P,Q)$,
\begin{align*}
\P_\theta\left(\Vert S^+\Vert_{row} \geq 2 m\Vert P+Q\Vert_{row}  \right) &\leq n\exp\left(-\frac{m\Vert P+Q\Vert_{row}}{8}\right),
\\ \text{and}\quad
\P_\theta\left(\Vert S^+\Vert_{row} \leq \frac{m}{4} \Vert P+Q\Vert_{row}  \right) &\leq \exp\left(-\frac{m\Vert P+Q\Vert_{row}}{8}\right) .
\end{align*}
Furthermore, if $\Vert P+Q\Vert_{row} \leq \frac{\tau}{4m}$, then
\begin{align*}
\P_\theta\left(\Vert S^+\Vert_{row} \geq 2\tau  \right) &\leq n\exp\left(-\tau\right).
\end{align*}
\end{lem}

\paragraph{\textbf{Controlling the type-I error rate}}~\\
We now begin the proof of Lemma~\ref{thm_optest_UB}. For $\theta =(P,Q) \in \Omega_0$, observe that $\E[S^-] = 0$ and \begin{align}
\P_\theta&(\Psi_{op}=1) 
\label{eqn_opUB_H0}
\\&\leq \P_\theta\left( \frac{\Vert S^- \Vert_{op}}{\sqrt{\Vert S^+ \Vert_{row}}} > t_1 \sqrt{\ln\left(\frac{n}{\eta}\right)} \right) \bigwedge \P_\theta\left( \Vert S^+\Vert_{row} > t_2\ln\left(\frac{n}{\eta}\right)\right)
\nonumber
\end{align}
As in Section~\ref{Proof_thm_frotest_UB}, we distinguish between two cases.

\begin{description}
\item[\normalfont\underline{\textit{Case 1: $\Vert P+Q\Vert_{row} \geq \frac{t_1^2}{4m}\ln(\frac{n}{\eta})$:}}]~
We can write
\begin{align*}
\P_\theta&\left( \frac{\Vert S^- \Vert_{op}}{\sqrt{\Vert S^+ \Vert_{row}}} > t_1 \sqrt{\ln\left(\frac{n}{\eta}\right)} \right) 
\\&\hskip5ex\leq 
\P_\theta\left( \Vert S^- \Vert_{op} > \frac{t_1}{2} \sqrt{m\Vert P+Q\Vert_{row}\ln\left(\frac{n}{\eta}\right)} \right) 
\\&\hskip25ex+ \P_\theta\left(\Vert S^+\Vert_{row}<\frac{m}{4}\Vert P+Q\Vert_{row}\right)
\end{align*}
Due to Lemma~\ref{lem_matrixBernstein}, the first term is bounded by $2n\exp(-\frac{t_1^2}{12} \ln(\frac{n}{\eta})) \leq 2(\frac{\eta}{n})^{t_1^2/12-1}$.
Using second inequality of Lemma~\ref{lem_row_sum_conc} and the lower bound on $\Vert P+Q\Vert_{row}$, we can say that the second term is bounded by $(\frac{\eta}{n})^{t_1^2/32}$.
Since $\eta<1$ and $n\geq2$ (otherwise, we deal with empty graphs on one vertex), the above sum can be made smaller than $\frac\eta2$ by choosing $t_1$ large enough.

\item[\normalfont\underline{\textit{Case 2: $\Vert P+Q\Vert_{row} \leq \frac{t_1^2}{4m}\ln(\frac{n}{\eta})$:}}]~
In this case, set $t_2 = 2t_1^2$ and use the last bound of Lemma~\ref{lem_row_sum_conc} to conclude that
\begin{align*}
\P_\theta\left( \Vert S^+\Vert_{row} > t_2\ln\left(\frac{n}{\eta}\right)\right) 
\leq n\exp\left(-t_1^2\ln\left(\frac{n}{\eta}\right)\right) \leq \left( \frac{\eta}{n}\right)^{t_1^2-1},
\end{align*}
which is at most $\frac\eta2$ for large enough $t_1$.
Thus, $\P_\theta(\Psi_{op}=1) \leq \frac\eta2$ for all $\theta\in\Omega_0$.
\end{description}

\paragraph{\textbf{Controlling the type-II error rate}}~\\
We now bound the Type-II error rate. For $\theta=(P,Q)\in \Omega_1$,
\begin{align}
\P_\theta&(\Psi_{op}=0) 
\nonumber
\\&\leq \P_\theta\left( \frac{\Vert S^- \Vert_{op}}{\sqrt{\Vert S^+ \Vert_{row}}} \leq t_1 \sqrt{\ln\left(\frac{n}{\eta}\right)} \right) + \P_\theta\left( \Vert S^+\Vert_{row} \leq t_2\ln\left(\frac{n}{\eta}\right)\right)
\nonumber
\\&\leq \P_\theta\left( \Vert S^- \Vert_{op} \leq \frac{3t_1}{2} \sqrt{m\Vert P+Q\Vert_{row}\ln\left(\frac{n}{\eta}\right)} \right) 
\label{eqn_pf_opH1}
\\&~~+ \P_\theta\left(\Vert S^+\Vert_{row}\geq \frac{9m}{4}\Vert P+Q\Vert_{row}\right)
+ \P_\theta\left( \Vert S^+\Vert_{row} \leq 2t_1^2\ln\left(\frac{n}{\eta}\right)\right)
\nonumber
\end{align}
Before bounding the individual terms, we recall that the sufficient condition on $\sc$ implies that
\begin{align}
\Vert P+Q\Vert_{row} \geq \Vert P-Q\Vert_{op} > \sc \geq \frac{C'}{m}\ln\left(\frac{n}{\eta}\right)
\label{eqn_pf_oprow}
\end{align}
For $C'\geq 8t_1^2$, we can bound the third term in~\eqref{eqn_pf_opH1} by
\begin{align*}
\P_\theta\left(\Vert S^+\Vert_{row} \leq \frac{m}{4} \Vert P+Q\Vert_{row}  \right) &\leq \exp\left(-\frac{m\Vert P+Q\Vert_{row}}{8}\right) \leq \left(\frac{\eta}{n}\right)^{C'/8}.
\end{align*}
The second term is similarly bounded by
\begin{align*}
\P_\theta\left(\Vert S^+\Vert_{row} \geq 2m \Vert P+Q\Vert_{row}  \right) &\leq n\exp\left(-\frac{m\Vert P+Q\Vert_{row}}{8}\right) \leq \left(\frac{\eta}{n}\right)^{C'/8-1}.
\end{align*}

To bound the first term in~\eqref{eqn_pf_opH1}, we use $\Vert P+Q\Vert_{op} \leq n\cc$ and the sufficient condition on $\sc$ to see that
\begin{align*}
\frac{3t_1}{2} \sqrt{m\Vert P+Q\Vert_{row}\ln\left(\frac{n}{\eta}\right)}
\leq \frac{3t_1}{2} \sqrt{mn\cc\ln\left(\frac{n}{\eta}\right)}
\leq \frac{3t_1}{2C} m\sc
< \frac{m}{2}\Vert P-Q\Vert_{op}
\end{align*}
where the last inequality holds for large $C$.
Hence, the first term in~\eqref{eqn_pf_opH1} is bounded by
\begin{align*}
&\P_\theta\left( \Vert S^- \Vert_{op} <  \frac{m}{2}\Vert P-Q\Vert_{op}\right) 
\leq \P_\theta\left( \Vert S^- - m(P-Q) \Vert_{op} >  \frac{m}{2}\Vert P-Q\Vert_{op}\right) 
\\&\hskip10ex\leq
\P_\theta\left( \Vert S^- - m(P-Q)\Vert_{op} > \frac{3t_1}{2} \sqrt{m\Vert P+Q\Vert_{row}\ln\left(\frac{n}{\eta}\right)} \right) 
\end{align*}
The first inequality follows from use of reverse triangle inequality, that is, $\Vert S^- - m(P-Q)\Vert_{op} \geq m\Vert P-Q\Vert_{op} - \Vert S^-\Vert_{op}$.
Now, for large $C'$, we can use~\eqref{eqn_pf_oprow} to claim that $\frac{3t_1}{2} \sqrt{m\Vert P+Q\Vert_{row}\ln\left(\frac{n}{\eta}\right)} \leq m\Vert P+Q\Vert_{row}$.
Hence, Lemma~\ref{lem_matrixBernstein} shows that the probability is bounded by $2(\frac{\eta}{n})^{3t_1^2/4-1}$.
Combining the above three bounds, we argue that the Type-II error rate is smaller than $\frac\eta2$ if $t_1,C,C'$ are large enough. Hence, the maximum risk is at most $\eta$ under the stated sufficient condition.

We conclude this section with the proofs for Lemmas~\ref{lem_matrixBernstein}--\ref{lem_row_sum_conc}.

\begin{proof}[Proof of Lemma~\ref{lem_matrixBernstein}]
Let $\e_1,\ldots,\e_n$ denote the standard basis for $\R^n$. Then we can write $S^-- m(P-Q)$ as
\begin{align*}
S^--m(P-Q) = \sum_{i<j}\sum_{k=1}^m &\left((A_{G_k})_{ij} - P_{ij}\right) \left(\e_i\e_j^T + \e_j\e_i^T\right) 
\\&-  \sum_{i<j}\sum_{k=1}^m \left((A_{H_k})_{ij} - Q_{ij}\right) \left(\e_i\e_j^T + \e_j\e_i^T\right)\;,
\end{align*}
which is a sum of $2m\binom{n}{2}$ independent random matrices. One can see that each of these matrices has zero mean, and its operator norm is bounded by 1 almost surely.
Moreover, for each matrix, we can write
\begin{align*}
\E_\theta \left[\left((A_{G_k})_{ij} - P_{ij}\right)^2 \left(\e_i\e_j^T + \e_j\e_i^T\right)^2\right] = 
P_{ij}(1-P_{ij}) \left(\e_i\e_i^T + \e_j\e_j^T\right)\;.
\end{align*}
Hence, the sum of all such expected matrices is a diagonal matrix with maximum diagonal entry bounded by $m\Vert P+Q\Vert_{row}$. Based on these observations, we use the matrix Bernstein inequality (Theorem~1.4 of~\citet{Tropp_2012_jour_FOCM} or Corollary~7.1 of~\citet{Oliveira_2009_arxiv_0600}) to conclude that
\begin{align*}
\P_\theta\left(\Vert S^- - m(P-Q)\Vert_{op} > \tau\right) \leq 2n\exp\left(\frac{-\tau^2}{2m\Vert P+Q\Vert_{row} + \frac23\tau}\right).
\end{align*}
The claim follows by using the condition $\tau\leq m\Vert P+Q\Vert_{row}$.
\end{proof}

\begin{proof}[Proof of Lemma~\ref{lem_row_sum_conc}]
Let $d_i = \sum_j (P_{ij}+Q_{ij})$ and without loss of generality, assume that the first row sum is largest, that is, $d_1 = \Vert P+Q\Vert_{row}$.
To prove the first inequality, we write
\begin{align*}
\P_\theta&\left(\Vert S^+\Vert_{row} \geq 2 m\Vert P+Q\Vert_{row}  \right)
= \P_\theta\left(\max_i \sum_{j=1}^n S^+_{ij} \geq 2m \Vert P+Q\Vert_{row}  \right)  
\\&\hskip12ex\leq 
\sum_{i=1}^n \P_\theta\left(\sum_{j=1}^n\sum_{k=1}^m (A_{G_k})_{ij} + (A_{H_k})_{ij} \geq 2m \Vert P+Q\Vert_{row}  \right)
\end{align*}
using union bound. The probability corresponds to the tail of the sum of $2nm$ independent random variables, each lying in the interval $[0,1]$.
Moreover, for any $i$, $\Var\left(\sum_{j,k} (A_{G_k})_{ij} + (A_{H_k})_{ij}\right) \leq md_i$.
Now, consider $i$ such that $d_i \geq \frac13\Vert P+Q\Vert_{row}$. We can use Bernstein inequality to write
\begin{align*}
\P_\theta&\left(\sum_{j=1}^n\sum_{k=1}^m (A_{G_k})_{ij} + (A_{H_k})_{ij} \geq 2m \Vert P+Q\Vert_{row}  \right)
\\&\leq \P_\theta\left(\sum_{j=1}^n\sum_{k=1}^m (A_{G_k})_{ij} - P_{ij} + (A_{H_k})_{ij} - Q_{ij}\geq m d_i \right)
\\&\leq \exp\left(-\frac{m^2d_i^2}{2md_i + \frac23md_i}\right)
\\&\leq \exp\left(-\frac{m\Vert P+Q\Vert_{row}}{8}\right)
\end{align*} 
since $d_i \geq \frac13\Vert P+Q\Vert_{row}$. For other rows, where $d_i < \frac13\Vert P+Q\Vert_{row}$, we have by the Markov inequality
\begin{align*}
\P_\theta&\left(\sum_{j=1}^n\sum_{k=1}^m (A_{G_k})_{ij} + (A_{H_k})_{ij} \geq 2m \Vert P+Q\Vert_{row})  \right)
\\&\leq  \exp(-2m\Vert P+Q\Vert_{row})\prod_{j=1}^n\prod_{k=1}^m \E_\theta[\exp((A_{G_k})_{ij})]\E_\theta[\exp((A_{H_k})_{ij})]
\\&= \exp(-2m\Vert P+Q\Vert_{row})\prod_{j=1}^n \prod_{k=1}^m \left(1+(e-1)P_{ij}\right) \left(1+(e-1)Q_{ij}\right) .
\end{align*}
Using the facts that $P_{ij}Q_{ij} \leq (P_{ij}+Q_{ij})/2$ and  $d_i < \frac13\Vert P+Q\Vert_{row}$, we have
\begin{align*}
\P_\theta&\left(\sum_{j=1}^n\sum_{k=1}^m (A_{G_k})_{ij} + (A_{H_k})_{ij} \geq 2m \Vert P+Q\Vert_{row})  \right)
\\&\leq \exp(-2m\Vert P+Q\Vert_{row})\prod_{j=1}^n \prod_{k=1}^m (1+4(P_{ij}+Q_{ij}))
\\&\leq \exp(-2m\Vert P+Q\Vert_{row} + 4md_i) \leq \exp(-2m\Vert P+Q\Vert_{row}/3).
\end{align*} 
Combining above bounds, we obtain the first inequality in Lemma~\ref{lem_row_sum_conc}.
We prove the second inequality by observing that
\begin{align*}
\P_\theta\left(\Vert S^+\Vert_{row} \leq \frac{m}{4} \Vert P+Q\Vert_{row} \right) 
&= \P_\theta\left(\max_i \sum_{j=1}^n S^+_{ij}  \leq \frac{m}{4} \Vert P+Q\Vert_{row}  \right)  
\\&\leq  \P_\theta\left(\sum_{j=1}^n\sum_{k=1}^m (A_{G_k})_{1j} + (A_{H_k})_{1j} \leq \frac{md_1}{4} \right)
\\&\leq \exp\left(-\frac{(\frac34 md_1)^2}{2md_1 + \frac12md_1}\right)
\\&\leq \exp\left(-\frac{m\Vert P+Q\Vert_{row}}{8}\right)
\end{align*}
where the third line is due to Bernstein inequality, and the last line uses the fact that $d_1=\Vert P+Q\Vert_{row}$. 

The third inequality in Lemma~\ref{lem_row_sum_conc} is proved using Markov inequality in the following way.
\begin{align*}
\P_\theta\big(\Vert S^+\Vert_{row} &\geq 2\tau \big)
\leq \sum_{i=1}^n \P_\theta\left(\sum_{j=1}^n\sum_{k=1}^m (A_{G_k})_{ij} + (A_{H_k})_{ij} \geq 2\tau  \right)
\\&\leq \exp(-2\tau) \sum_{i=1}^n \prod_{j=1}^n\prod_{k=1}^m \E_\theta[\exp((A_{G_k})_{ij})]\E_\theta[\exp((A_{H_k})_{ij})]
\\&\leq \exp(-2\tau)\sum_{i=1}^n\prod_{j=1}^n \prod_{k=1}^m (1+4(P_{ij}+Q_{ij}))
\\&\leq n\exp(- 2\tau + 4m\Vert P+Q\Vert_{row}) \,,
\end{align*}
which is smaller that $n\exp(-\tau)$ for $\tau \geq 4m\Vert P+Q\Vert_{row}$. 
\end{proof}

\subsection{Proof of Proposition~\ref{thm_optest_UB2}} 
We note that our proof is valid for any $n\geq 2$. If one additionally assumes that $n$ is large enough, then the terms involving $\ln(\frac2\eta)$ in the test $\Psi'_{op}$~\eqref{eqn_op_test2} and Proposition~\ref{thm_optest_UB2} can be replaced by absolute constants.

\paragraph{\textbf{Preliminary computations and inequalities}}~\\
We first state the concentration result of~\citet[][Theorem 2.1]{Le_2015_arxiv} adapted to our setting. The adaptation is explicitly described later in the proof of the lemma.
\begin{lem}[Concentration of trimmed adjacency matrix]
\label{lem_trimconc}
Let $G\sim\IER(P)$ with $\Vert P\Vert_{max}\leq \cc$, where $\cc > \frac{10}{n}$.
Consider the trimming procedure which isolates all vertices with degree larger than $cn\cc\ln(\frac2\eta)$, and $A'_G$ be the resulting adjacency matrix.
There exists an absolute constant $C>0$ such that with probability $1 - \frac\eta4$,
\begin{displaymath}
 \Vert A'_G - P\Vert_{op} \leq C\sqrt{n\cc}\ln^2\left(\frac2\eta\right).
\end{displaymath}
\end{lem}

\paragraph{\textbf{Controlling the type-I error rate}}~\\
The proof of Proposition~\ref{thm_optest_UB2} follows directly from the above lemma by setting $t=2C$.
Let $\theta = (P,Q)\in \Omega_0$. 
\begin{align*}
\P_\theta\left(\Psi'_{op} = 1\right)
&=\P_\theta \left( \Vert A'_G - A'_H \Vert_{op} > t\sqrt{n\cc}\ln^2\left(\frac2\eta\right)\right)
\\&\leq\P_\theta \left( \Vert A'_G - P\Vert_{op} + \Vert A'_H - Q \Vert_{op} > 2C\sqrt{n\cc}\ln^2\left(\frac2\eta\right)\right)
\end{align*}
where the bound follows from triangle inequality and noting that $P=Q$.
Lemma~\ref{lem_trimconc} states that each of the two norms can exceed $C\sqrt{n\cc}\ln^2(\frac2\eta)$ with probability at most $\frac\eta4$. Hence, the above probability is bounded by $\frac\eta2$.

\paragraph{\textbf{Controlling the type-II error rate}}~\\
Now consider $\theta=(P,Q)\in\Omega_1$ and use the condition $\sc\geq 4C\sqrt{n\cc}\ln^2(\frac2\eta)$.
By reverse triangle inequality
\begin{align*}
\Vert A'_G - A'_H\Vert_{op} &\geq \Vert P-Q \Vert_{op} - \Vert A'_G - P\Vert_{op} - \Vert A'_H - Q \Vert_{op}
\\&> 4C\sqrt{n\cc}\ln^2\left(\frac2\eta\right) -  \Vert A'_G - P\Vert_{op} - \Vert A'_H - Q \Vert_{op}\,.
\end{align*}
Using the above lower bound for $\Vert A'_G - A'_H\Vert_{op}$, we get
\begin{align*}
\P_\theta\left(\Psi'_{op} = 0\right)
&=\P_\theta \left( \Vert A'_G - A'_H \Vert_{op} \leq 2C\sqrt{n\cc}\ln^2\left(\frac2\eta\right)\right)
\\&\leq\P_\theta \left( \Vert A'_G - P\Vert_{op} + \Vert A'_H - Q \Vert_{op} > 2C\sqrt{n\cc}\ln^2\left(\frac2\eta\right)\right)
\end{align*}
which is also at most $\frac\eta2$ due to Lemma~\ref{lem_trimconc}. Hence, Proposition~\ref{thm_optest_UB2} holds.
We now prove Lemma~\ref{lem_trimconc}.

\begin{proof}[Proof of Lemma~\ref{lem_trimconc}]
Define $d = n\cc$. 
\citet[][Theorem 2.1]{Le_2015_arxiv} holds for the following general regularisation process to obtain $A'_G$.
Consider any subset of at most $\frac{10n}{d}$ vertices and reduce the weights of edges incident on them in an arbitrary way. If $d'$ is the maximum degree in $A'_G$, then for any $r\geq1$,
\begin{align*}
\Vert A'_G - P\Vert_{op} \leq C'r^{3/2}\left(\sqrt{d} + \sqrt{d'}\right)
\end{align*}
with probability at least $1-n^{-r}$, where $C'$ is a constant.

Let us fix $r = 6\ln(\frac2\eta)$. For any $n\geq2$, $n^{-r} \leq \exp(-3\ln(\frac2\eta)) \leq \frac\eta8$.
We claim that if $c\geq5$, then with probability at least $1-\frac\eta8$, $G$ has at most $\frac{10n}{d} = \frac{10}{\cc}$ vertices with degree larger than $cn\cc\ln(\frac2\eta)$.
Hence, after deleting all edges incident on them, $d' =cn\cc\ln(\frac2\eta)$ and the above bound suggests that with probability $1-\frac\eta4$
\begin{align*}
\Vert A'_G - P\Vert_{op} 
&\leq C'\left(6\ln\left(\frac2\eta\right)\right)^{3/2}\left(\sqrt{n\cc} + \sqrt{cn\cc\ln\left(\frac2\eta\right)}\right)
\\&\leq C\sqrt{n\cc}\ln^2\left(\frac2\eta\right)
\end{align*}
for an appropriately defined $C$.

We conclude the proof by showing that at most $\frac{10}{\cc}$ vertices can have degree larger than $cn\cc\ln(\frac2\eta)$. Let $c' = \frac12 c\ln(\frac2\eta)$.
Consider any vertex set $V_1$ of size $n_1$. The probability that every vertex in $V_1$ has degree larger than $2c'n\cc$ is bounded by
\begin{align*}
\P&\left(\sum_{i\in V_1} \sum_j (A_G)_{ij} > 2c'n\cc n_1\right)
\\&\leq \P\left(\sum_{i\in V_1,j\notin V_1} (A_G)_{ij} + \sum_{i,j\in V_1, i<j} (A_G)_{ij}  > c'n\cc n_1\right)
\\&\leq \P\left(\sum_{i\in V_1,j\notin V_1} \left((A_G)_{ij} - P_{ij}\right) + \sum_{i,j\in V_1, i<j} \left((A_G)_{ij} -P_{ij}\right)  > (c'-1)n\cc n_1\right)
\\&\leq \exp\left( - \frac{c'n\cc n_1}{4}\right)  
\end{align*}
where the last step is due to Bernstein inequality and holds for $c'\geq3$.

Now if more than $\frac{10}{\cc}$ vertices have degree larger than $2c'n\cc$, then one can find such a vertex set $V_1$ of size $n_1\in[\frac{10}{\cc}, n]$. But the probability that there exists such a set of size $n_1$ is smaller than
\begin{align*}
\binom{n}{n_1} \exp\left( - \frac{c'n\cc n_1}{4}\right) 
&\leq \exp \left( n_1 \ln\left(\frac{en}{n_1}\right) - \frac{c'}{4}nn_1\cc\right) 
\\&\leq \exp \left( n_1 \left(\ln\left(\frac{en\cc}{10}\right) - \frac{c'}{4}n\cc\right)\right) 
\end{align*}
Recall our assumption $n\cc>10$, and observe that for $x>10$, $\ln(ex/10) \leq \frac{c'}{8}x$. 
Hence, the above probability is smaller than
\begin{align*}
\exp\left(- \frac{c'}{8}n_1 n\cc\right) \leq \exp\left(- \frac{5c}{8}n\ln\left(\frac2\eta\right)\right) \leq \frac\eta8
\end{align*}
for $c\geq6$ and $n\geq2$, which completes the proof.
\end{proof}

\bibliographystyle{imsart-nameyear}
\bibliography{refs}

\end{document}